\newcommand{\set}[1]{\left\{ #1\right\}}
\newcommand{\gilt}{:}
\newcommand{\sodass}{\,:\,}
\newcommand{\setGilt}[2]{\left\{ #1\sodass #2\right\}}
\newcommand{\realrange}[2]{\left[#1, #2\right]}
\newcommand{\unitrange}[2]{\realrange{0}{1}}
\newcommand{\llabel}[1]{\label{\labelprefix:#1}}
\newcommand{\labelprefix}{} 
\newcommand{\discussionsize}{\small}
\newcommand{\frage}[1]{}
\newenvironment{code}{\noindent
\begin{tabbing}%
\hspace{2em}\=\hspace{2em}\=\hspace{2em}\=\hspace{2em}\=\hspace{2em}\=%
\hspace{2em}\=\hspace{2em}\=\hspace{2em}\=\hspace{2em}\=\hspace{2em}\=%
\kill}{\end{tabbing}}
\newcommand{\labelcommand}{}
\newcommand{\captiontext}{}
\newsavebox{\codeparam}
\newcounter{lineNumber}
\newenvironment{disscodepos}[3]{%
\renewcommand{\labelcommand}{#2}%
\renewcommand{\captiontext}{#3}%
\sbox{\codeparam}{\parbox{\textwidth}{#3}}%
\begin{figure}[#1]\begin{center}\begin{code}\setcounter{lineNumber}{1}}{%
\end{code}\end{center}\caption{\llabel{\labelcommand}\captiontext}\end{figure}}
\newcommand{\Is}       {:=}
\newdimen\endofsize\endofsize=0.5em
\def\endofbeweis{~\quad\hglue\hsize minus\hsize
                 \hbox{\vrule height \endofsize width
\endofsize}\par}
\def\MdR{\ensuremath{\mathbb{R}}}
\newcommand{\innerOuter}{\mathrm{innerOuter}}
\newcommand{\expansion}{\mathrm{expansion}}
\newcommand{\algname}{KaFFPa}
\newcommand{\Id}[1]{\ensuremath{\mathit{#1}}}
\newcommand{\AuthorChris}[1]{#1}
\newcommand{\mytitle}{Engineering Multilevel Graph Partitioning Algorithms}
\begin{document}
\title{\mytitle}
\author{
Peter Sanders,  Christian Schulz 
}
\institute{Karlsruhe Institute of Technology (KIT), 76128 Karlsruhe, Germany\\ 
\email{\{sanders,christian.schulz\}@kit.edu}}
\date{today}

\maketitle
\begin{abstract}
We present a multi-level graph partitioning algorithm using novel local improvement algorithms and global search strategies transferred from multigrid linear solvers. 
Local improvement algorithms are based on max-flow min-cut computations and more localized FM searches. 
By combining these techniques, we obtain an algorithm that is fast on the one hand and on the other hand is able to improve the best known partitioning results for many inputs. 
For example, in Walshaw's well known benchmark tables we achieve 317 improvements for the tables 1\%, 3\% and 5\% imbalance. Moreover, in 118 out of the 295 remaining cases we have been able to reproduce the best cut in this benchmark.  
\end{abstract}

\section{Introduction}
\frage{todo: rewrite depending on where the main improvements are}
\textit{Graph partitioning} is a common technique in computer science, engineering, and related fields.
For example, good partitionings of unstructured graphs are very valuable in the area of \emph{high performance computing}.
In this area graph partitioning is mostly used to partition the underlying graph model of computation and communication.
Roughly speaking, vertices in this graph represent computation units and edges denote communication. 
Now this graph needs to be partitioned such there are few edges between the blocks (pieces). 
In particular, if we want to use $k$ PEs (processing elements) we want to partition the graph into $k$ blocks of about equal size. 
In this paper we focus on a version of the problem that constrains the
maximum block size to $(1+\epsilon)$ times the average block size and tries to
minimize the total cut size, i.e., the number of edges that run between blocks.

A successful heuristic for partitioning large graphs is the \emph{multilevel graph partitioning} (MGP) approach depicted in Figure~\ref{fig:mgp}
where the graph is recursively \emph{contracted} to achieve smaller graphs which should reflect the same basic structure as the initial graph. After applying an \emph{initial partitioning} algorithm to the smallest graph, the contraction is undone and, at each level, a
\emph{local refinement} method is used to improve the partitioning induced by the coarser level. 

Although several successful multilevel partitioners have been developed in the last 13 years, we had the impression that certain\frage{ps was: many. which sounds like we already did in the previous paper what we do here} aspects of the method are not well understood. 
We therefore have built our own graph partitioner KaPPa \cite{kappa} (Karlsruhe Parallel Partitioner) with focus on scalable parallelization. Somewhat astonishingly, we also obtained improved partitioning quality through rather simple methods. This motivated us to make a fresh start\frage{ps was: a step backwards. This sounds negative} putting all aspects of MGP on trial\frage{moved this phrase forward and dropped later. Things sounded repetitive.}. 
Our focus is on solution quality and sequential speed for large graphs. We defer the question of parallelization since it introduces complications that make it difficult to try out a large number of alternatives
for the remaining aspects of the method. This paper reports the first results 
we have obtained which relate to the local improvement methods and overall search strategies. We obtain a system that can be configured to either achieve the best known partitions for many standard benchmark instances or to be the fastest available system for large graphs while still improving partitioning quality compared to the previous fastest system.\frage{todo: match with overview, remove redundancies.}
\begin{figure}[t]
\begin{center}
\includegraphics[width=0.4\textwidth]{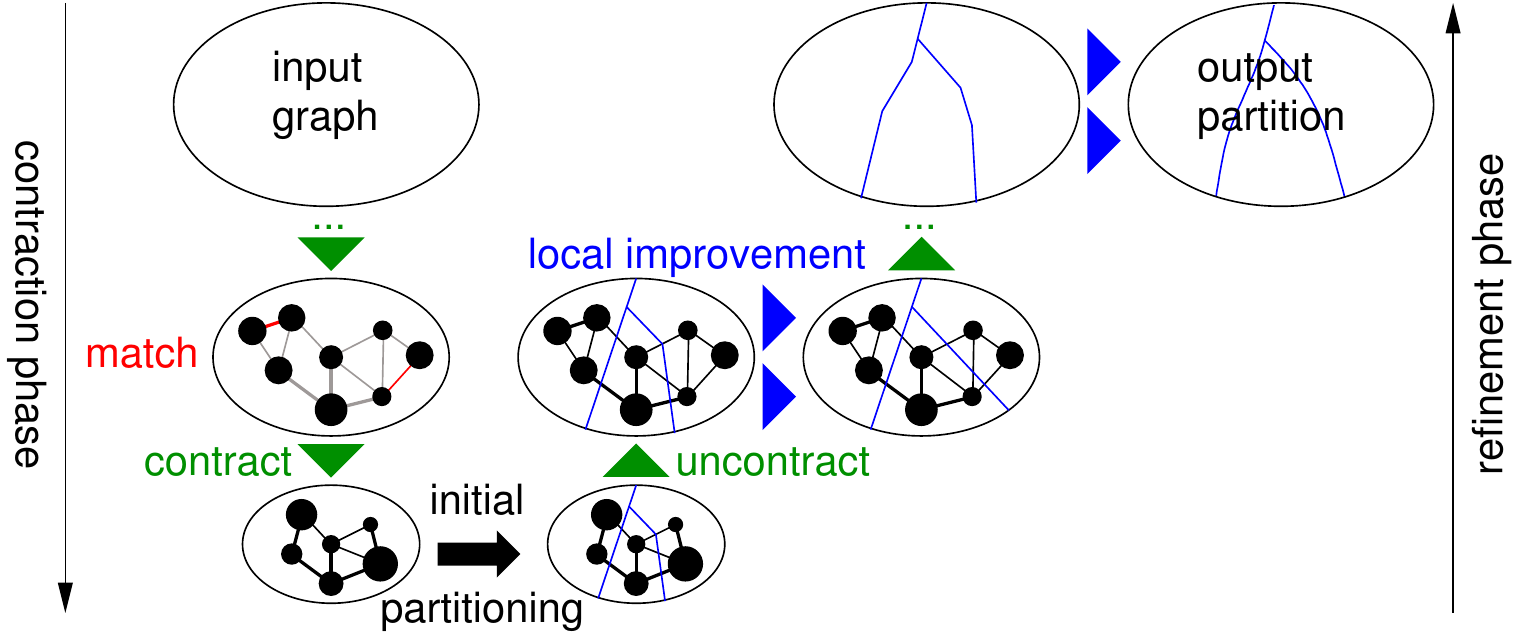}
\end{center}
\caption{Multilevel graph partitioning.}
\label{fig:mgp}
\end{figure}

We begin in Section~\ref{s:preliminaries} by introducing basic concepts. 
After shortly presenting Related Work in Section~\ref{s:related} we continue describing novel local improvement methods in Section~\ref{s:refinement}. 
This is followed by Section~\ref{s:globalsearch} where we present new global search methods. 
Section~\ref{s:experiments} is a summary of extensive experiments done to tune the algorithm and evaluate
its performance.  
We have implemented these techniques in the graph partitioner KaFFPa (Karlsruhe Fast Flow Partitioner) which is written in C++. Experiments reported in Section~\ref{s:experiments} indicate that KaFFPa scales well
        to large networks and is able to compute partitions of very high quality.

\section{Preliminaries}\label{s:preliminaries}
\subsection{Basic concepts}
Consider an undirected graph $G=(V,E,c,\omega)$ 
with edge weights $\omega: E \to \MdR_{>0}$, node weights
$c: V \to \MdR_{\geq 0}$, $n = |V|$, and $m = |E|$.
We extend $c$ and $\omega$ to sets, i.e.,
$c(V')\Is \sum_{v\in V'}c(v)$ and $\omega(E')\Is \sum_{e\in E'}\omega(e)$.
$\Gamma(v)\Is \setGilt{u}{\set{v,u}\in E}$ denotes the neighbors of $v$.

We are looking for \emph{blocks} of nodes $V_1$,\ldots,$V_k$ 
that partition $V$, i.e., $V_1\cup\cdots\cup V_k=V$ and $V_i\cap V_j=\emptyset$
for $i\neq j$. The \emph{balancing constraint} demands that 
$\forall i\in 1..k\gilt c(V_i)\leq L_{\max}\Is (1+\epsilon)c(V)/k+\max_{v\in V} c(v)$ for
some parameter $\epsilon$. 
\AuthorChris{The last term in this equation arises because each node is atomic and
therefore a deviation of the heaviest node has to be allowed.}
The objective is to minimize the total \emph{cut} $\sum_{i<j}w(E_{ij})$ where 
$E_{ij}\Is\setGilt{\set{u,v}\in E}{u\in V_i,v\in V_j}$. 
An abstract view of the partitioned graph is the so called \emph{quotient graph}, where vertices represent blocks and edges are induced by connectivity between blocks. An example can be found in Figure~\ref{fig:pictureoverview}.
By default, our initial inputs will have unit edge and node weights. 
However, even those will be translated into weighted problems in the course of the algorithm.

A matching $M\subseteq E$ is a set of edges that do not share any common nodes,
i.e., the graph $(V,M)$ has maximum degree one.  \emph{Contracting} an edge $\set{u,v}$ means to replace the nodes $u$ and $v$ by a
 new node $x$ connected
to the former neighbors of $u$ and $v$. We set $c(x)=c(u)+c(v)$ 
\AuthorChris{so the weight of a node at each level is the number of nodes it is representing in the original graph}. If replacing
edges of the form $\set{u,w},\set{v,w}$ would generate two parallel edges
$\set{x,w}$, we insert a single edge with
$\omega(\set{x,w})=\omega(\set{u,w})+\omega(\set{v,w})$.

\emph{Uncontracting} an edge $e$ undos its contraction. 
In order to avoid tedious notation, $G$ will denote the current state of the graph
before and after a (un)contraction unless we explicitly want to refer to 
different states of the graph.

The multilevel approach to graph partitioning consists of three main phases.
In the \emph{contraction} (coarsening) phase, 
we iteratively identify matchings $M\subseteq E$ 
and contract the edges in $M$. This is repeated until $|V|$ falls below some threshold.
Contraction should quickly reduce the size of the input and each computed level
should reflect the global structure of the input network. In particular,
nodes should represent densely connected subgraphs.
\frage{hier bereits edge ratings und matching algs sowie time/quality tradeoffs diskutieren?}

Contraction is stopped when the graph is small enough to be directly
partitioned in the \emph{initial partitioning phase} using some other algorithm. 
We could use a trivial initial
partitioning algorithm if we contract until exactly $k$ nodes are left. However,
if $|V|\gg k$ we can afford to run some expensive algorithm for initial
partitioning.

In the \emph{refinement} (or uncoarsening) phase, the matchings are iteratively uncontracted.  After uncontracting a matching, the refinement algorithm moves nodes between blocks in order to improve the cut size or balance.  The nodes to move are often found using some kind of local search. The intuition behind this approach is that a good partition at one level of the hierarchy will also be a good partition on the next finer level so that refinement will quickly find a good solution.
\subsection{More advanced concepts}
\frage{ps ggf ganz weglassen und verbleibende Teile nach oben schieben
oder nur kurz unter implementation erwähnen?}
 This section gives a brief overview over the algorithms KaFFPa uses during contraction and initial partitioning. 
KaFFPa makes use of techniques proposed in \cite{kappa} namely the application of edge ratings, the GPA algorithm to compute high quality matchings, pairwise refinements between blocks and it also uses Scotch \cite{Scotch} as an initial partitioner \cite{kappa}.

\paragraph*{Contraction}\label{p:morecontraction} The contraction starts by rating the edges using a \emph{rating function}. The rating function indicates how much sense it makes to contract an edge
based on \emph{local} information.  Afterwards a \emph{matching} algorithm tries to
maximize the sum of the ratings of the contracted edges looking at the \emph{global} structure of the graph. While
the rating functions allows us a flexible characterization of what a ``good''
contracted graph is, the simple, standard definition of the matching problem
allows us to reuse previously developed algorithms for weighted matching. Matchings are contracted until
the graph is ``small enough''.
We employed the ratings  
\newcommand{\Outer}{\mathrm{Out}}
$\expansion^{*2}(\set{u,v})\Is {\omega(\set{u,v})^2}/{c(u)c(v)}$ and
$\innerOuter(\set{u,v})\Is {\omega(\set{u,v})}/{(\Outer(v)+\Outer(u)-2\omega({u,v}))}$
where $\Outer(v)\Is \sum_{x\in\Gamma(v)}\omega(\set{v,x})$, since they yielded the best results in \cite{kappa}. 
As a further measure to avoid unbalanced inputs to the initial partitioner, KaFFPa never allows a node $v$ to participate in a contraction if the weight of $v$ exceeds $1.5n/20k$

We used the \textit{Global Path Algorithm (GPA)} which runs in near linear time to compute matchings. 
The Global Path Algorithm was proposed in \cite{MauSan07} as a synthesis of the Greedy algorithm and the Path Growing Algorithm \cite{DH03a}. 
It grows heavy weight paths and even length cycles to solve the matching problem on those optimally using dynamic programming. 
We choose this algorithm since in \cite{kappa} it gives empirically considerably better results than Sorted Heavy Edge Matching, Heavy Edge Matching or Random Matching \cite{SchKarKum00}.

Similar to the Greedy approach, GPA scans the edges in order of decreasing weight
but rather than immediately building a matching, it first constructs a collection
of paths and even length cycles. Afterwards, optimal solutions are computed for each
of these paths and cycles using dynamic programming. 
\paragraph*{Initial Partitioning} The contraction is \textit{stopped} when the number of remaining nodes is
below 
$\max{(60k,n/(60k))}$. The graph is then small enough
to be initially partitioned by some other partitioner.
Our framework allows using kMetis or Scotch for initial
partitioning. As observed in \cite{kappa}, Scotch \cite{Scotch} produces better initial partitions than Metis, and therefore we also use it in \algname. 

\paragraph*{Refinement}\label{p:refinement}
After a matching is uncontracted during the refinement phase, some local improvement methods are applied in order to reduce the cut while maintaining the balancing constraint.

We implemented two kinds of local improvement schemes within our framework. 
The first scheme is so called \emph{quotient graph style refinement} \cite{kappa}. 
This approach uses the underlying \emph{quotient graph}. 
Each edge in the quotient graph yields a pair of blocks which share a non empty boundary. 
On each of these pairs we can apply a two-way local improvement method which only moves nodes between 
the current two blocks. 
Note that this approach enables us to integrate flow based improvement techniques between two blocks which are described in Section \ref{s:flowrefinement}.
\begin{figure}[b]
\begin{center}
\begin{tabular}{ccc}
\includegraphics[width=80pt]{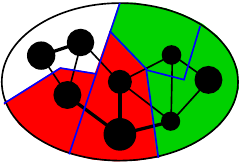}&\, \, \, \, \, \, \, \,&\includegraphics[width=80pt]{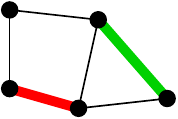}
\end{tabular}
\end{center}
\caption{A graph which is partitioned into five blocks and its corresponding
          quotient graph $\mathcal{Q}$ which has five nodes and six edges. Two pairs of blocks are highlighted in red and green.}
        \label{fig:pictureoverview}
\end{figure}

Our two-way local search algorithm works as in KaPPa \cite{kappa}. We present it here for completeness.
It is basically the FM-algorithm \cite{fiduccia1982lth}: For each of the two blocks $A$, $B$ under consideration,
a priority queue of nodes eligible to move is kept. The priority is based on
the \emph{gain}, i.e., the decrease in edge cut when the node is moved to the
other side. Each node is moved at most once within a single local search. The
queues are initialized in random order with the nodes at the partition
boundary. 

There are different possibilities to select a block from which a node shall be moved. 
The classical FM-algorithm \cite{fiduccia1982lth} alternates between both blocks. 
We employ the \emph{TopGain} strategy from \cite{kappa} which selects the block 
with the largest gain and breaks ties randomly if the the gain values are equal. 
In order to achieve a good balance, TopGain adopts the exception that the block with larger weight is used when one
of the blocks is overloaded.  After a stopping criterion is applied we rollback to the best found cut within the balance constraint.

The second scheme is so call \emph{$k$-way local search}. This method has a more global view since it is not restricted to moving nodes between two blocks only. It also basically the FM-algorithm \cite{fiduccia1982lth}. We now outline the variant we use. 
Our variant uses only one priority queue $P$ which is initialized with a subset $S$ of the partition boundary in a random order. 
The priority is based on the max gain $g(v) = \max_P g_P(v)$ where $g_P(v)$ is the decrease in edge cut when moving $v$ to block $P$.  
Again each node is moved at most once. 
Ties are broken randomly if there is more than one block that will give max gain when moving $v$ to it. 
Local search then repeatedly looks for the highest gain node $v$. 
However a node $v$ is not moved, if the movement would lead to an unbalanced partition. 
The $k$-way local search is stopped if the priority queue $P$ is empty (i.e. each node was moved once) or a stopping criteria described below applies. Afterwards the local search is rolled back the lowest cut fulfilling the balance condition that occurred during this local search. 
This procedure is then repeated until no improvement is found or a maximum number of iterations is reached.

We adopt the stopping criteria proposed in KaSPar \cite{kaspar}. 
This stopping rule is derived using a random walk model. 
Gain values in each step are modelled as identically distributed, independent random variables whose expectation $\mu$ and variance $\sigma^2$ is obtained from the previously observed $p$ steps since the last improvement. 
Osipov and Sanders \cite{kaspar} derived that it is unlikely for the local search to produce a better cut if 
$$p\mu^2 > \alpha \sigma^2 + \beta$$
for some tuning parameters $\alpha$ and $\beta$. The Parameter $\beta$ is a base value that avoids stopping just after a small constant number of steps that happen to have small variance. We also set it to $\ln n$.

There are different ways to initialize the queue $P$, e.g. the complete partition boundary or only the nodes which are incident to more than two partitions (corner nodes). 
Our implementation takes the complete partition boundary for initialization.  
In Section~\ref{ss:multitry} we introduce multi-try $k$-way searches which is a more localized $k$-way search inspired by KaSPar \cite{kaspar}. 
This method initializes the priority queue with only a single boundary node and its neighbors that are also boundary nodes.

The main difference of our implementation to KaSPar is that we use only one priority queue. 
KaSPar maintains a priority queue for each block. 
A priority queue is called eligible if the highest gain node in this queue can be moved to its target block without violating the balance constraint. 
Their local search repeatedly looks for the highest gain node $v$ in any eligible priority queue and moves this node.

\section{Related Work}\label{s:related}
There has been a huge amount of research on graph
partitioning so that we refer the reader to \cite{fjallstrom1998agp,SchKarKum00,Walshaw07} for more material.
All general purpose methods that are able to obtain good partitions for large real world
graphs are based on the multilevel principle outlined in
Section~\ref{s:preliminaries}. The basic idea can be traced back to multigrid
solvers for solving systems of linear equations \cite{Sou35,Fedorenko61} but
more recent practical methods are based on mostly graph theoretic aspects in
particular edge contraction and local search.  Well known software packages
based on this approach include Chaco \cite{Chaco}, Jostle~\cite{Walshaw07},
Metis \cite{SchKarKum00}, Party \cite{Party}, and Scotch \cite{Scotch}.  

KaSPar \cite{kaspar} is a new graph partitioner based on the central idea to (un)contract only a single edge between two levels. 
It previously obtained the best results for many of the biggest graphs in \cite{Walshawbench}. 

KaPPa \cite{kappa} is a "classical" matching based MGP algorithm designed for scalable parallel execution and its local search only considers independent pairs of blocks at a time. 

DiBaP \cite{meyerhenke2008ndb} is a multi-level graph partitioning package
where local improvement is based on diffusion which also yields partitions of very high quality. 

MQI \cite{lang2004flow} and Improve \cite{andersen2008algorithm} are flow-based methods for improving graph cuts when cut quality is measured by quotient-style metrics such as expansion or conductance. Given an undirected graph with an initial partitioning, they build up a completely new directed graph which is then used
to solve a max flow problem.  Furthermore, they have been able to show that there is an improved quotient cut if and only if the maximum flow is less than $ca$, where $c$ is the initial cut and $a$ is the number of vertices in the smaller block of the initial partitioning. This approach is
currently only feasible for $k=2$.
Improve also uses several minimum cut computations to improve the quotient cut score of a proposed partition. Improve always beats or ties MQI.

Very recently an algorithm called PUNCH \cite{delling2010graph} has been introduced. 
This approach is not based on the multilevel principle.  
However, it creates a coarse version of the graph based on the notion of natural cuts. 
Natural cuts are relatively sparse cuts close to denser areas. 
They are discovered by finding minimum cuts between carefully chosen regions of the graph. 
Experiments indicate that the algorithm computes very good cuts for road networks.  
For instances that don't have a natural structure such as road networks, natural cuts are not very helpful.

The concept of \emph{iterated multilevel algorithms} was introduced by \cite{toulouse1999multi,walshaw2004multilevel}. The main idea is to iterate the coarsening and uncoarsening phase and use the information gathered. That means that once the graph is partitioned, edges that are between
two blocks will not be matched and therefore will also not be contracted. This ensures increased quality of the partition if the refinement algorithms guarantees not to find a worse partition than the initial one. 

\section{Local Improvement}\label{s:refinement}
Recall that once a matching is uncontracted a local improvement method tries to reduce the cut size of the projected partition.
We now present two novel local improvement methods.
The first method which is described in Section \ref{s:flowrefinement} is based on  max-flow min-cut computations 
between pairs of 
\begin{figure}[t]
\begin{center}
\includegraphics[width=0.4\textwidth]{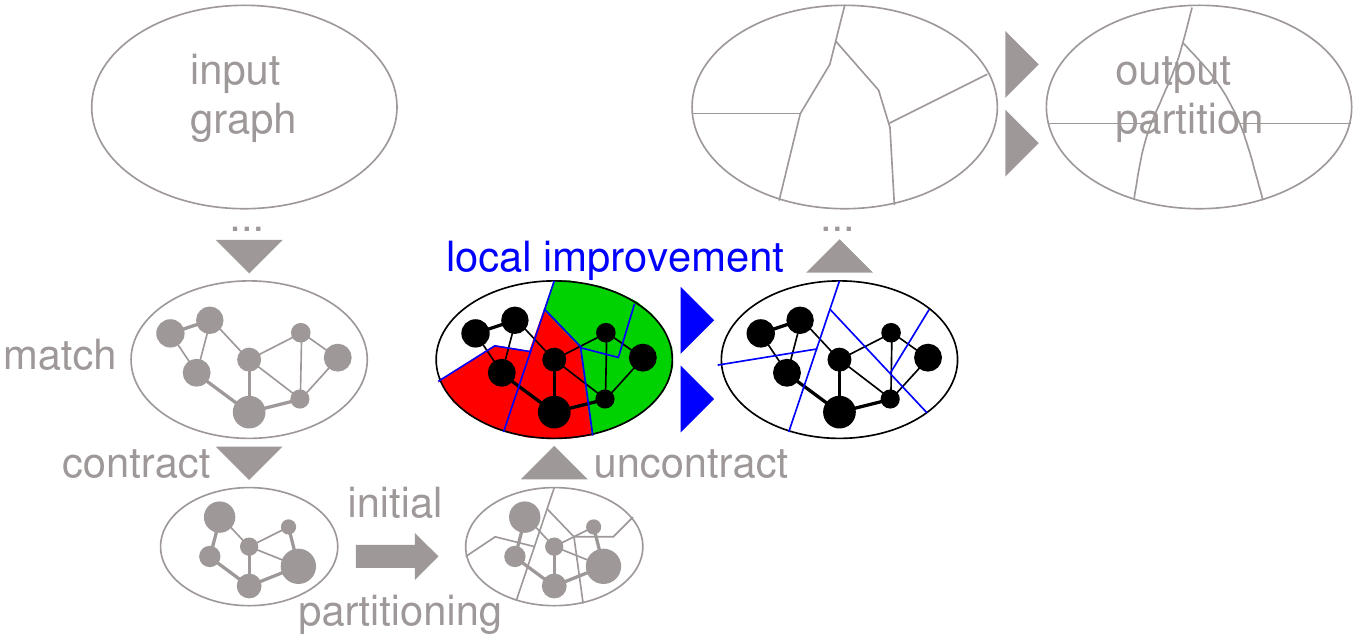}
\end{center}
\caption{After a matching is uncontracted a local improvement method is applied. }
\end{figure}
 blocks, i.e. improving a given 2-partition. 
 Since each edge of the quotient graph yields a pair of blocks which share a non empty boundary,  we integrated this method into the quotient graph style refinement scheme which is described in 
Section~\ref{p:refinement}.  The second method which is described in Section \ref{ss:multitry} is called multi-try FM which is a more localized $k$-way local search.
Roughly speaking, a $k$-way local search is repeatedly started with a priority queue which is initialized with only one random boundary node and its neighbors that are also boundary nodes. 
At the end of the section we shortly show how the pairwise refinements can be scheduled and how the more localized search can be incorporated with this scheduling.  
\subsection{Using Max-Flow Min-Cut Computations for Local Improvement}\label{s:flowrefinement}
We now explain how flows can be used to improve a given partition of two blocks and therefore can be used as a refinement algorithm in a multilevel framework. For simplicity we assume $k=2$. However it is clear that this refinement method fits perfectly into the quotient graph style refinement
algorithms. 

To start with the description of the constructed max-flow min-cut problem, we need a few notations. 
Given a two-way partition $P: V \to \{1,2\}$ of a graph $G$ we define the \emph{boundary nodes} as $\delta := \{u \mid \exists (u,v) \in E : P(u) \neq P(v)\}$. 
We define \emph{left boundary nodes} to be $\delta_l := \delta \cap \{u \mid P(u) = 1\}$ and \emph{right boundary nodes} to be $\delta_r := \delta \cap \{u \mid P(u) = 2\}$.
Given a set of nodes $B \subset V$ we define its \emph{border} $\partial B := \{ u \in B \mid \exists (u,v) \in E : v \not\in B\}$. Unless otherwise mentioned we call $B$ \emph{corridor} because it will be a zone around the initial cut. 
The set $\partial_l B := \partial B \cap \{u \mid P(u) = 1\}$ is called \emph{left corridor border} and the set $\partial_r B := \partial B \cap \{ u \mid P(u) = 2\}$ is called \emph{right corridor border}. 
We say an \emph{$B$-corridor induced subgraph} $G'$ is the node induced subgraph $G[B]$ plus two nodes $s,t$ and additional edges starting from $s$ or edges ending in $t$.
An $B$-corridor induced subgraph has the \emph{cut property $C$} if each ($s$,$t$)-min-cut in $G'$ induces a cut within the balance constrained in $G$.

The main idea is to construct a $B$-corridor induced subgraph $G'$ with cut property $C$. 
On this graph we solve the max-flow min-cut problem. The computed min-cut yields a feasible improved cut within the balance constrained in $G$.  
The construction is as follows (see also Figure \ref{fig:flowconstruction}). 
\begin{figure}[h]
\begin{center}
\begin{tabular}{c}
\includegraphics[width=140pt]{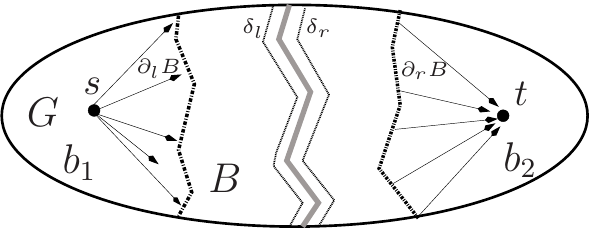}\, \, \includegraphics[width=140pt]{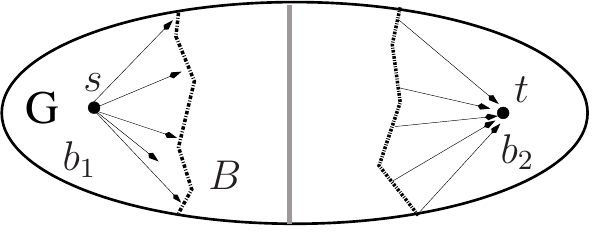}
\end{tabular}
\end{center}
\caption{ The construction of a feasible flow problem which yields optimal cuts in $G'$ and an improved cut within the balance constraint in $G$. On the top the initial construction is shown and on the bottom we see the improved partition.}
        \label{fig:flowconstruction}
\end{figure}

First we need to find a corridor $B$ such that the $B$-corridor induced subgraph will have the cut property $C$. 
This can be done by performing two Breadth First Searches (BFS). Each node touched during these searches belongs to the corridor $B$.
The first BFS is initialized with the left boundary nodes $\delta_l$. It is only expanded with nodes that are in block 1. 
As soon as the weight of the area found by this BFS would exceed $(1+\epsilon) c(V)/2 - w(\text{block 2})$, we stop the BFS. 
The second BFS is done for block 2 in an analogous fashion.  

In order to achieve the cut property $C$, the $B$-corridor induced subgraph $G'$ gets additional $s$-$t$ edges. 
More precisely $s$ is connected to all left corridor border nodes $\partial_l B$ and all right corridor border nodes $\partial_r B$ are connected to $t$. All of these new edges get the edge weight $\infty$. Note that this are directed edges.

The constructed $B$-corridor subgraph $G'$ has the cut property $C$ since the worst case new weight of block 2 is lower or equal to $w(\text{block 2}) + (1+\epsilon)c(V)/2 - w(\text{block 2}) = (1+\epsilon)c(V)/2$. Indeed the same holds for the worst case new weight of block 1. 

There are multiple ways to improve this method. First, if we found an improved edge cut, we can apply this method again since the initial boundary has changed which implies that it is most likely that the corridor $B$ will also change. 
Second, we can adaptively control the size of the corridor $B$  which is found by the BFS. 
This enables us to search for cuts that fulfill our balance constrained even in a larger corridor ( say $\epsilon'=\alpha\epsilon$ for some parameter $\alpha$ ), i.e. if the found min-cut in $G'$ for $\epsilon'$ fulfills the balance constraint in $G$, we accept it and increase $\alpha$ to $\min(2\alpha, \alpha')$ where $\alpha'$ is an upper bound for $\alpha$. 
Otherwise the cut is not accepted and we decrease $\alpha$ to $\max(\frac{\alpha}{2},1)$. 
This method is iterated until a maximal number of iterations is reached or if the computed cut yields a feasible partition without an decreased edge cut. We call this method \textit{adaptive flow iterations}. 

\subsubsection{Most Balanced Minimum Cuts}
Picard and Queyranne have been able to show that one $(s,t)$ max-flow contains information about all minimum ($s$,$t$)-cuts in the graph. 
Here finding all minimum cuts reduces to a straight forward enumeration. 
Having this in mind the idea to search for min-cuts in larger corridors becomes even more attractive. 
Roughly speaking, we present a heuristic that, given a max-flow, creates min-cuts that are better balanced. 
First we need a few notations. For a graph $G=(V,E)$ a set $C \subseteq V$ is a \textit{closed vertex set} iff for all vertices $u,v \in V$, the conditions $u \in C$ and $(u,v) \in E$ imply $v \in C$. An example can be found in Figure~\ref{fig:closedvertexset}.
\begin{lemma}[Picard and Queyranne \cite{picard1980structure}]
There is a 1-1 correspondence between the minimum $(s,t)$-cuts of a graph and the closed vertex sets containing $s$ in the residual graph of a maximum $(s,t)$-flow.  
\end{lemma}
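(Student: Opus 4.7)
The plan is to exhibit the correspondence explicitly: map a minimum cut $(S,\bar S)$ (with $s\in S$, $t\in\bar S$) to the vertex set $S$, and then show that the image consists of exactly the closed vertex sets of the residual graph $G_f$ that contain $s$ but not $t$. Throughout, let $f$ be a fixed maximum $(s,t)$-flow and let $G_f$ denote its residual graph; recall that because $f$ is maximum, $t$ is unreachable from $s$ in $G_f$, so in particular such closed sets exist.

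For the forward direction I would argue that every minimum cut $(S,\bar S)$ saturates all forward arcs and carries zero flow on all backward arcs. This is the standard consequence of max-flow/min-cut: since $|f|$ equals the capacity of $(S,\bar S)$ and also equals the net flow across $(S,\bar S)$, every term on the ``net flow'' side must be at its extreme value. Consequently, no arc of $G_f$ crosses from $S$ to $\bar S$, which is exactly the statement that $S$ is closed in $G_f$. Clearly $s\in S$ and $t\notin S$, so the map lands in the claimed family, and it is manifestly injective since $S$ determines $(S,\bar S)$.

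For the reverse direction, take any closed $C\subseteq V$ in $G_f$ with $s\in C$, $t\notin C$. The defining property of closedness tells us that no residual arc leaves $C$; translated back to $G$ this means every original arc from $C$ to $\bar C$ is saturated by $f$, and every original arc from $\bar C$ to $C$ carries zero flow. Hence the net flow across $(C,\bar C)$ equals the capacity $c(C,\bar C)$. But the net flow across any $s$-$t$ cut equals $|f|$, which by max-flow/min-cut equals the minimum cut capacity, so $c(C,\bar C)$ attains this minimum and $(C,\bar C)$ is a minimum $(s,t)$-cut whose ``$s$-side'' is precisely $C$. This shows the map is surjective and completes the bijection.

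I expect the only subtle step to be the bookkeeping in the reverse direction, namely verifying cleanly that ``no residual arc leaves $C$'' implies saturation of forward arcs and zero flow on backward arcs simultaneously; both cases need to be checked because the residual graph contains both kinds of arcs. The rest is essentially an unpacking of max-flow/min-cut, and once the saturation identity is in hand, matching net flow with capacity yields the minimality immediately.
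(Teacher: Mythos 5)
Your proof is correct, and it is essentially the classical argument of Picard and Queyranne themselves; note that the paper does not prove this lemma at all but simply cites \cite{picard1980structure}, so there is no in-paper proof to compare against. Both directions are handled soundly: tightness in the max-flow/min-cut equality forces every forward arc of a minimum cut to be saturated and every backward arc to carry zero flow, which is precisely the absence of residual arcs leaving the source side (i.e.\ closedness in the residual graph), and conversely closedness of $C$ forces the net flow across $(C,V\setminus C)$ to equal its capacity, so that capacity equals $|f|$ and the cut is minimum. One detail you handled correctly that the lemma's wording glosses over: a closed vertex set containing $s$ could also contain $t$ (for instance $C=V$), which yields no cut at all, so the correspondence is properly with closed sets containing $s$ but not $t$; your observation that $t$ is unreachable from $s$ in the residual graph of a maximum flow guarantees such sets exist, e.g.\ the set of vertices reachable from $s$, which is exactly the cut found by the Ford--Fulkerson labeling procedure mentioned later in the paper.
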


To be more precise for a given closed vertex set $C$ containing $s$ of the residual graph the corresponding min-cut is $(C, V \backslash C)$. Note that distinct maximum flows may produce different residual graphs but the set of closed vertex sets remains the same. To enumerate all minimum cuts of a graph \cite{picard1980structure} a further reduced graph is computed which is described below. 
However, the problem of finding the minimum cut with the best balance (most balanced minimum cut) is NP-hard \cite{feige2006finding,bonsma2010most}.  
\begin{figure}[t]
\begin{center}
\includegraphics[width=100pt]{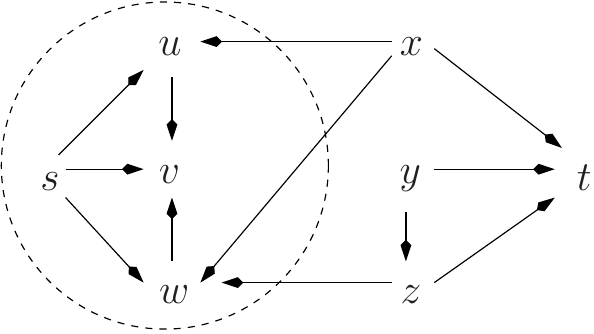}
\end{center}
\caption{A small graph where $C=\{s,u,v,w\}$ is a closed vertex set.}
\label{fig:closedvertexset}
\end{figure}

The minimum cut that is identified by the labeling procedure of Ford and Fulkerson \cite{ford1962flows} is the one with the smallest possible source set. 
We now define how the representation of the residual graph can be made more compact \cite{picard1980structure} and then explain the heuristic we use to obtain closed vertex sets on this graph to find min-cuts that have a better balance. 
After computing a maximum $(s,t)$-flow, we compute the strongly connected components of the residual graph using the algorithm proposed in \cite{cheriyan1996algorithms,gabow2000path}.
We make the representation more compact by contracting these components and refer to it as \textit{minimum cut representation}. 
This reduction is possible since two vertices that lie on a cycle have to be in the same closed vertex set of the residual graph.
The result is a weighted, directed and acyclic graph (DAG). Note that each closed vertex set of the minimum cut representation induces a minimum cut as well. 

As proposed in \cite{picard1980structure} we make the minimum cut representation even more compact:
We eliminate the component $T$ containing the sink $t$, and all its predecessors (since they cannot belong to a closed vertex set not containing $T$) and the component $S$ containing the source, and all its successors (since they must belong to a closed vertex set containing $S$) using a BFS. 

We are now left with a further reduced graph. 
On this graph we search for closed vertex sets (containing $S$) since they still induce $(s,t)$-min-cuts in the original graph. This is done by using the following heuristic which is repeated a few times.  
The main idea is that a topological order yields complements of closed vertex sets quite easily. 
Therefore, we first compute a random topological order, e.g. using a randomized DFS. 
Next we sweep through this topological order and sequentially add the components to the complement of the closed vertex set.
Note that each of the computed complements of closed vertex sets $\tilde{C}$ also yields a closed vertex set ($V\backslash \tilde{C}$).
That means by sweeping through the topological order we compute closed vertex sets each inducing a min-cut having a different balance.  
We stop when we have reached the best balanced minimum cut induced through this topological order with respect to the original graph partitioning problem.
The closed vertex set with the best balance occurred during the repetitions of this heuristic is returned. 
Note in large corridors this procedure may finds cuts that are not feasible, e.g. if there is no feasible minimum cut. 
Therefore the algorithm is combined with the adaptive strategy from above. 
We call this method \textit{balanced adaptive flow iterations}.
\begin{figure}[t!]
\begin{center}
\begin{tabular}{c}
\includegraphics[width=140pt]{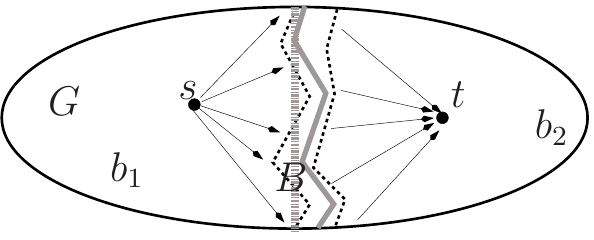} \, \, \includegraphics[width=140pt]{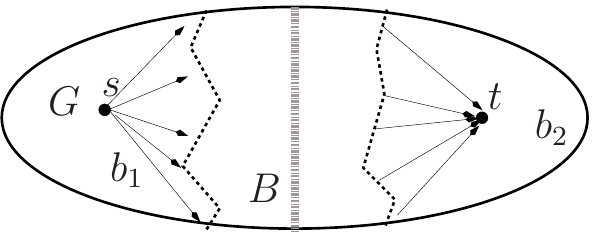}
\end{tabular}
\end{center}
\caption{In the situation on the top it is not possible in the small corridor around the initial cut to find the dashed minimum cut which has optimal balance; however if we solve a larger flow problem on the bottom and search for a cut with good balance we can find the dashed minimum cut with optimal
        balance but not every min cut is feasible for the underlying graph partitioning problem. }
\label{fig:bestbalancedcutuseful}
\end{figure}
\pagebreak
\subsection{Multi-try FM} \label{ss:multitry}
This refinement variant is organized in rounds. In each round we put \emph{all} boundary nodes of the current block pair into a todo list. The todo list is then permuted. 
Subsequently, we begin a $k$-way local search starting with a random node of this list if it is still a boundary node and its neighboring nodes that are also boundary nodes. 
Note that the difference to the global $k$-way search described in Section~\ref{p:refinement} is the initialisation of the priority queue. 
If the selected random node was already touched by a previous $k$-way search in this round then no search is started. 
Either way, the node is removed from the todo list (simply swapping it with the last element and executing a pop\_back on that list).
For a $k$-way search it is not allowed to move nodes that have been touched in a previous run.
This way we can assure that at most $n$ nodes are touched during one round of the algorithm. 
This algorithm uses the adaptive stopping criteria from KaSPar which is described in Section~\ref{p:refinement}.  

\subsection{Scheduling Quotient Graph Refinement}
\label{sec:schedulingqgraphedges}
There a two possibilities to schedule the execution of two way refinement algorithms on the quotient graph. 
Clearly the first simple idea is to traverses the edges of $Q$ in a random order and perform refinement on them. This is iterated until no change occurred or a maximum number of iterations is reached. 
The second algorithm is called \textit{active block scheduling}.
The main idea behind this algorithm is that the local search should be done in areas in which change still happens and therefore avoid unnecessary local search. 
The algorithm begins by setting every block of the partition \textit{active}. 
Now the scheduling is organized in rounds. 
In each round, the algorithm refines adjacent pairs of blocks, which have at least one active block, in a random order. 
If changes occur during this search both blocks are marked active for the next round of the algorithm. 
After each pair-wise improvement a multi-try FM search ($k$-way) is started. 
It is initialized with the boundaries of the current pair of blocks. Now each block which changed during this search is also marked active. 
The algorithm stops if no active block is left.  
Pseudocode for the algorithm can be found in the appendix in Figure~\ref{fig:activeblockscheduling}.

\section{Global Search}\label{s:globalsearch}
Iterated Multilevel Algorithms where introduced by \cite{toulouse1999multi,walshaw2004multilevel} (see Section \ref{s:related}).
For the rest of this paper Iterated Multilevel Algorithms are called $V$-cycles unless otherwise mentioned. 
The main idea is that if a partition of the graph is available then it can be reused during the coarsening and uncoarsening phase.
To be more precise, the multi-level scheme is repeated several times and once the graph is partitioned, edges between two blocks will not be matched and therefore will also not be contracted such that a given partition can be used as initial partition of the coarsest graph.  
This ensures increased quality of the partition if the refinement algorithms guarantees not to find a worse partition than the initial one. 
Indeed this is only possible if the matching includes non-deterministic factors such as random tie-breaking, so that each iteration is very likely to give different coarser graphs. 
Interestingly, in multigrid linear solvers Full-Multigrid methods are generally preferable to simple $V$-cycles \cite{briggs2000multigrid}.
Therefore, we now introduce two novel global search strategies namely \textit{W-cycles} and \textit{F-cycles} for graph partitioning. 
A W-cycle works as follows: on \emph{each} level we perform \emph{two independent trials} using different random seeds for tie breaking during contraction, and local search. 
As soon as the graph is partitioned, edges that are between blocks are not matched.  
A F-cycle works similar to a W-cycle with the difference that the global number of independent trials on each level is bounded by 2. 
Examples for the different cycle types can be found in Figure~\ref{fig:cycles} and Pseudocode can be found in Figure~\ref{fig:globalsearchpseudocode}.
Again once the graph is partitioned for the first time, then this partition is used in the sense that edges between two blocks are not contracted. 
In most cases the initial partitioner is not able to improve this partition from scratch or even to find this partition. 
Therefore no further initial partitioning is used if the graph already has a partition available. 
These methods can be used to find very high quality partitions but on the other hand they are more expensive than a single MGP run. 
However, experiments in Section~\ref{s:experiments} show that all cycle variants are more efficient than simple plain restarts of the algorithm.
In order to bound the runtime we introduce a level split parameter $d$ such that the independent trials are only performed every $d$'th level. We go into more detail after we have analysed the run time of the global search strategies.  

\begin{figure}[h]
\begin{center}
\includegraphics[width=30pt]{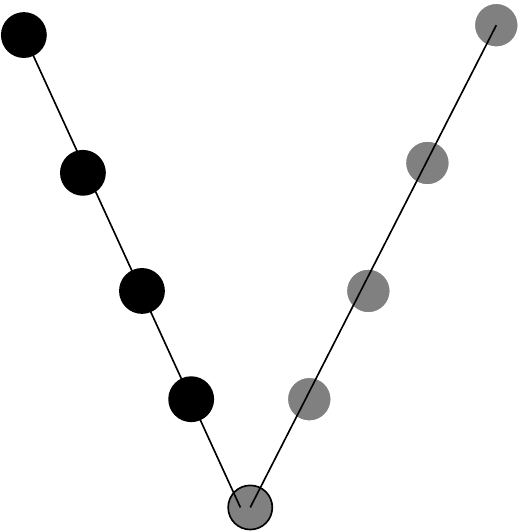} 
\includegraphics[width=107pt]{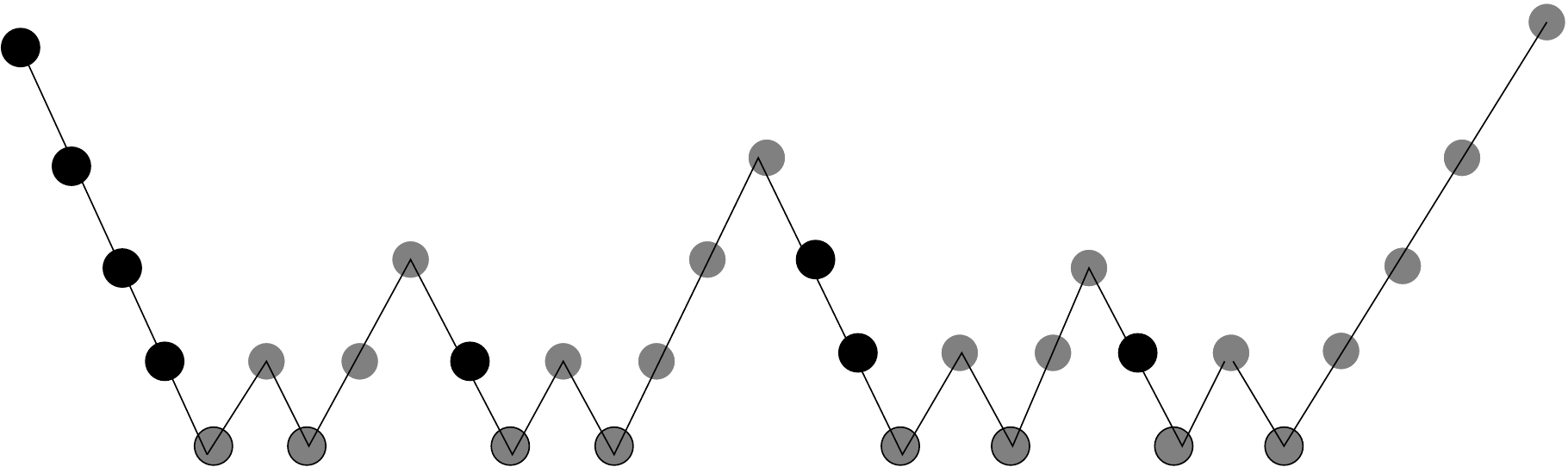}  
\includegraphics[width=82pt]{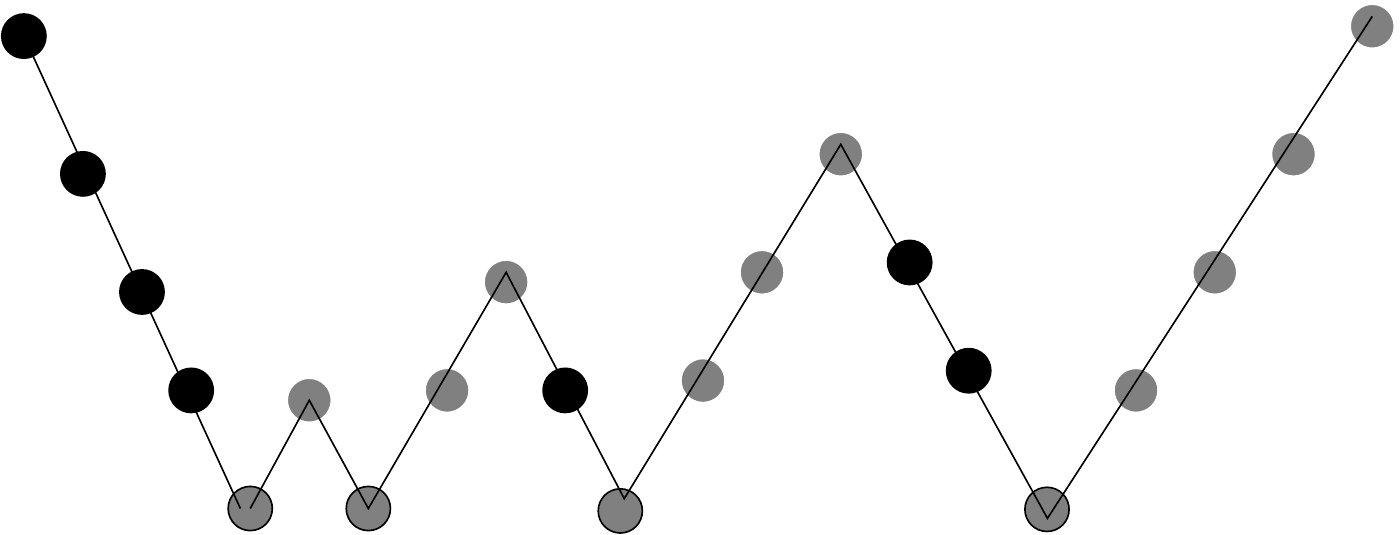}
\end{center}
\caption{From left to right: A single MGP V-cycle, a W-cycle and a F-cycle.} 
\label{fig:cycles}
\end{figure}
\vfill
\newpage 
\paragraph*{Analysis}
We now roughly analyse the run time of the different global search strategies under a few assumptions. In the following the shrink factor names the factor the graph shrinks during one coarsening step. 
\begin{theorem} 
If the time for coarsening and refinement is $T_{\text{cr}}(n) := bn$ and a constant shrink factor $a \in [1/2,1)$ is given. Then:
\begin{equation}
   T_{W,d}(n) \begin{cases}   
                        \lessapprox \frac{1-a^d}{1-2a^d}T_V(n)& \text{if } 2a^d < 1 \\
                        \in \Theta(n \log n) & \text{if } 2a^d = 1\\
                \in \Theta(n^{\frac{\log 2}{\log{\frac{1}{a^d}}}}) & \text{if } 2a^d > 1 \\
              \end{cases}
\end{equation}

\begin{equation}
   T_{F,d}(n) \leq \frac{1}{1-a^d}T_V(n)
\end{equation}
where $T_V$ is the time for a single V-cycle and $T_{W,d}$,$T_{F,d}$ are the time for a W-cycle and F-cycle with level split parameter $d$.  
\end{theorem}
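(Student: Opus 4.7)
The plan is to set up clean recurrences for the three cycle types under the given assumptions, solve the V-cycle recurrence exactly, and then handle W and F by reducing to (or imitating) the master-theorem analysis.

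First I would establish the baseline $T_V(n) = bn/(1-a)$. By the description, a single V-cycle does coarsening/refinement work on each level plus recurses once on the coarser graph, so $T_V(n) = bn + T_V(an)$. Unrolling gives the geometric sum $bn\sum_{i\ge 0} a^i = bn/(1-a)$, which is finite because $a<1$. This is the quantity the other two bounds are stated relative to, and it is also the function I would substitute on the right-hand side of the F-cycle bound.

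Next I would write down the W-cycle recurrence. With level split parameter $d$, the algorithm performs $d$ ordinary coarsening/refinement steps (contributing $bn(1+a+\dots+a^{d-1}) = bn(1-a^d)/(1-a)$) and then launches \emph{two} independent recursive calls on a graph of size $a^d n$:
\begin{equation*}
T_{W,d}(n) \;=\; bn\,\frac{1-a^d}{1-a} \;+\; 2\,T_{W,d}(a^d n).
\end{equation*}
Setting $\alpha := 2a^d$ and unrolling $L$ times produces $bn\tfrac{1-a^d}{1-a}\sum_{i=0}^{L-1}\alpha^i$ plus $2^L T_{W,d}(a^{dL}n)$, so the three cases of the theorem correspond exactly to $\alpha<1$, $\alpha=1$, $\alpha>1$. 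In the first case the geometric sum converges to $1/(1-2a^d)$, and dividing by $T_V(n)=bn/(1-a)$ yields $(1-a^d)/(1-2a^d)$. In the second case every one of the $\Theta(\log n)$ levels contributes $\Theta(bn(1-a^d)/(1-a))$, giving $\Theta(n\log n)$. In the third case I would invoke the master theorem on $T(n)=2T(a^d n)+O(n)$, whose non-root-dominated regime gives $\Theta(n^{\log_{1/a^d}2})=\Theta(n^{\log 2/\log(1/a^d)})$.

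Finally, for the F-cycle I would argue that the bounded-trials rule makes the recursion $T_{F,d}(n) = T_V(n) + T_{F,d}(a^d n)$: at most one extra V-cycle's worth of work is added at each of the split levels on top of the recursive call on the coarser graph. Using $T_V(a^d n) = a^d\,T_V(n)$ (which follows from linearity of $T_V$ in $n$), iterating gives $T_{F,d}(n) \le T_V(n)\sum_{i\ge 0}(a^d)^i = T_V(n)/(1-a^d)$, which is the stated bound. The main obstacle I anticipate is justifying the clean form of the F-cycle recurrence from the informal description; once that is pinned down, the rest is elementary geometric-series and master-theorem bookkeeping.
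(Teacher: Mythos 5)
Your proposal is correct and follows essentially the same route as the paper: the same geometric-sum evaluation of $T_V$, the identical W-cycle recurrence $T_{W,d}(n)=bn\frac{1-a^d}{1-a}+2T_{W,d}(a^dn)$ resolved by a geometric series for $2a^d<1$ and the master theorem for the other two cases, and the same F-cycle decomposition, since unrolling your recurrence $T_{F,d}(n)=T_V(n)+T_{F,d}(a^dn)$ gives exactly the paper's bound $T_{F,d}(n)\le\sum_i T_V(a^{id}n)\le\frac{1}{1-a^d}T_V(n)$. The only cosmetic difference is that you use the infinite-sum (exactly linear) form of $T_V$ where the paper keeps the finite sum and absorbs the discrepancy into its $\approx$/$\lessapprox$, which is consistent with the statement.
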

\begin{proof} 
The run time of a single V-cycle is given by $T_V(n) = \sum_{i=0}^l T_{\text{cr}}(a^in) = bn \sum_{i=0}^{l} a^i = bn(1-a^{l+1})/(1-a)$.
The run time of a W-cycle with level split parameter $d$ is given by the time of $d$ coarsening and refinement steps plus the time of the two trials on the created coarse graph. For the case $2a^d  < 1$ we get  
\begin{eqnarray*}
T_{W,d}(n) &=& bn\sum_{i=0}^{d-1}a^i + 2T_{W,d}(a^dn) \leq bn \frac{1-a^d}{1-a} \sum_{i=0}^{\infty} (2a^d)^i \\ 
           &\leq&  \frac{1-a^d}{(1-a^{l+1})(1-2a^d)}T_V(n) \approx \frac{1-a^d}{1-2a^d}T_V(n).
\end{eqnarray*}

The other two cases for the W-cycle follow directly from the master theorem for analyzing divide-and-conquer recurrences. 
To analyse the run time of a F-cycle we observe that 
\begin{eqnarray*}
T_{F,d}(n) & \leq & \sum_{i=0}^{l} T_{\text{cr}}(a^{i\cdot d}n) \leq \frac{bn}{1-a} \sum_{i=0}^{\infty} (a^d)^i =  \frac{1}{1-a^d}T_V(n)
\end{eqnarray*}
where $l$ is the total number of levels. This completes the proof of the theorem.
\end{proof}
Note that if we make the optimistic assumption that $a=1/2$ and set $d=1$ then a F-cycle is only twice as expensive as a single V-cycle. 
If we use the same parameters for a W-cycle we get a factor $\log n$ asymptotic larger execution times. 
However in practice the shrink factor is usually worse than $1/2$. 
That yields an even larger asymptotic run time for the W-cycle (since for $d=1$ we have $2a>1$).
Therefore, in order to bound the run time of the W-cycle the choice of the level split parameter $d$ is crucial. 
Our default value for $d$ for W- and F-cycles is 2, i.e. independent trials are only performed every second level. 
\newpage
\section{Experiments}\label{s:experiments}
\paragraph*{Implementation}
We have implemented the algorithm described above using C++.  Overall,
our program consists of about 12\,500 lines of code. 
Priority queues for the local search are based on binary heaps. 
Hash tables use the library (extended STL) provided with the GCC compiler.
For the following comparisons we used Scotch 5.1.9., DiBaP 2.0.229 and kMetis 5.0 (pre2).
The flow problems are solved using Andrew Goldbergs Network Optimization Library HIPR \cite{cherkassky1997implementing} which is integrated into our code.
\paragraph*{System}
We have run our code on a cluster where each node is equipped with two Quad-core
Intel Xeon processors (X5355) which run at a clock speed of 2.667 GHz, has 2x4
MB of level 2 cache each and run Suse Linux Enterprise 10 SP 1. Our program was compiled using GCC 
Version 4.3.2 and optimization level~3. 
\paragraph*{Instances}
We report experiments on two suites of instances summarized in
the appendix in Table~\ref{tab:instances}. These are the same instances as used for the evaluation of KaPPa \cite{kappa}. 
We present them here for completeness. 
\Id{rggX} is a \emph{random geometric graph} with
$2^{X}$ nodes where nodes represent random points in the unit square and edges
connect nodes whose Euclidean distance is below $0.55 \sqrt{ \ln n / n }$.
This threshold was chosen in order to ensure that the graph is almost connected. 
\Id{DelaunayX} is the Delaunay triangulation of $2^{X}$
random points in the unit square.  Graphs \Id{bcsstk29}..\Id{fetooth} and
\Id{ferotor}..\Id{auto} come from Chris Walshaw's benchmark archive
\cite{walshaw2000mpm}.  Graphs \Id{bel}, \Id{nld}, \Id{deu} and \Id{eur} are
undirected versions of the road networks of Belgium, the Netherlands, Germany,
and Western Europe respectively, used in \cite{DSSW09}. Instances
\Id{af\_shell9} and \Id{af\_shell10} come from the Florida Sparse Matrix Collection \cite{UFsparsematrixcollection}.  
For the number of partitions $k$ we choose the values
used in  \cite{walshaw2000mpm}: 2, 4, 8, 16, 32, 64.
Our default value for the allowed imbalance is 3 \% since this is one
of the values used in \cite{walshaw2000mpm} and the default value in Metis.

\paragraph*{Configuring the Algorithm}

We currently define three configurations of our algorithm: Strong, Eco and Fast.
The configurations are described below.

\textbf{KaFFPa Strong:}
The aim of this configuration is to obtain a graph partitioner that is able to achieve the best known partitions for many standard benchmark instances.
It uses the GPA algorithm as a matching algorithm combined with the rating function $\expansion^{*2}$.
However, the rating function $\expansion^{*2}$ has the disadvantage that it evaluates to one on the first level of an unweighted graph. 
Therefore, we employ $\innerOuter$ on the first level to infer structural information of the graph. 
We perform $100/\log k$ initial partitioning attempts using Scotch as an initial partitioner.  
The \textit{refinement phase} first employs $k$-way refinement (since it converges very fast) which is initialized with the complete partition boundary. 
It uses the adaptive search strategy from KaSPar \cite{kaspar} with $\alpha = 10$. 
The number of rounds is bounded by ten. 
However, the $k$-way local search is stopped as soon as a $k$-way local search round did not find an improvement.
We continue by performing quotient-graph style refinement. 
Here we use the active block scheduling algorithm which is combined with the multi-try local search (again $\alpha = 10$) as described in Section \ref{sec:schedulingqgraphedges}.
A pair of blocks is refined as follows: We start with a pairwise FM search which is followed by the max-flow min-cut algorithm (including the most balancing cut heuristic).  
The FM search is stopped if more than 5\% of the number of nodes in the current block pair have been moved without yielding an improvement. 
The upper bound factor for the flow region size is set to $\alpha'=8$.
As \textit{global search strategy} we use two F-cycles. 
Initial Partitioning is only performed if previous partitioning information is \emph{not} available. 
Otherwise, we use the given input partition. 

\textbf{KaFFPa Eco:} The aim of KaFFPa Eco is to obtain a graph partitioner that is fast on the one hand and on the other hand is able to compute partitions of high quality.
This configuration matches the first $\max(2,7-\log k)$ levels using a random matching algorithm. 
The remaining levels are matched using the GPA algorithm employing the edge rating function $\expansion^{*2}$. 
It then performs $\min(10,40/\log k)$ initial partitioning repetitions using Scotch as initial partitioner. 
The refinement is configured as follows: again we start with $k$-way refinement as in KaFFPa-Strong. 
However, for this configuration the number of $k$-way rounds is bounded by $\min(5, \log k)$.
We then apply quotient-graph style refinements as in KaFFPa Strong; again with slightly different parameters. 
The two-way FM search is stopped if 1\% of the number of nodes in the current block pair has been moved without yielding an improvement. 
The flow region upper bound factor is set to $\alpha'=2$.
We do not apply a more sophisticated global search strategy in order to be competitive regarding runtime. 

\textbf{KaFFPa Fast:} The aim of KaFFPa Fast is to get the fastest available system for large graphs while still improving partitioning quality to the previous fastest system. 
KaFFPa Fast matches the first four levels using a random matching algorithm. It then continues by using the GPA algorithm equipped with $\expansion^{*2}$ as a rating function. 
We perform exactly one initial partitioning attempt using Scotch as initial partitioner. 
The refinement phase works as follows: for $k\leq 8$ we only perform quotient-graph refinement: each pair of blocks is refined exactly once using the pair-wise FM algorithm. 
Pairs of blocks are scheduled randomly. 
For $k>8$ we only perform one $k$-way refinement round. 
In both cases the local search is stopped as soon as 15 steps have been performed without yielding an improvement. 
Note that using flow based algorithms for refinement is already too expensive.
Again we do not apply a more sophisticated global search strategy in order to be competitive regarding runtime.

\paragraph*{Experiment Description}
We performed two types of experiments namely normal tests and tests for effectiveness. Both are described below. 

\textbf{Normal Tests:} Here we perform 10 repetitions for the small networks and 5 repetitions for the other. We report the arithmetic average of computed cut size, running time and the best cut found. 
When further averaging over multiple instances, we use the geometric mean in order to give every instance the same influence on the \textit{final score}.  
\footnote{\AuthorChris{Because we have multiple repetitions for each instance (graph, $k$), we compute the geometric mean of the average (\textbf{Avg.}) edge cut values for each
instance or the geometric mean of the best (\textbf{Best.}) edge cut value occurred. The same is done for the runtime \textbf{t} of each algorithm configuration.}}  

\textbf{Effectiveness Tests:} Here each algorithm configuration has the same time for computing a partition. 
Therefore, for each graph and $k$ each configuration is executed once and we remember the largest execution time $t$ that occurred.  
Now each algorithm gets time $3t$ to compute a good partition, i.e. taking the best partition out of repeated runs. If a variant can perform a next run depends on the remaining time, i.e. we flip a coin with corresponding probabilities such that the expected time over multiple runs is $3t$.  This is repeated 5 times. The final score is computed as in the normal test using these values.
\\

\subsection{Insights about Flows}
We now evaluate how much the usage of max-flow min-cut algorithms improves the final partitioning results and check its effectiveness. 
For this test we use a basic two-way FM configuration to compare with. 
This basic configuration is modified as described below to look at a specific algorithmic component regarding flows.  
It uses the Global Paths Algorithm as a matching algorithm and performs five initial partitioning attempts using Scotch as initial partitioner. 
It further employs the active block scheduling algorithm equipped with the two-way FM algorithm described in Section~\ref{p:refinement}. 
The FM algorithm stopps as soon as 5\% of the number of nodes in the current block pair have been moved without yielding an improvement. 
Edge rating functions are used as in KaFFPa Strong.
Note that during this test our main focus is the evaluation of flows and therefore we don't use $k$-way refinement or multi-try FM search. 
For comparisons this basic configuration is extended by specific algorithms, e.g. a configuration that uses Flow, FM and the most balanced cut heuristics (MB). This configuration is then indicated by (+Flow, +FM, +MB). 
\begin{table}[b!]
\small
\begin{center}
\hspace*{-0.8cm}        
\begin{tabular}{|l|r|r|r|r|r|r|r|r|r|r|r|r|r|r|r|r|}\hline
                        Variant                            & \multicolumn{4}{|c|}{(+Flow, -MB, -FM )}                                                                                                                                 & \multicolumn{4}{|c|}{(+Flow, +MB, -FM)}                                           & \multicolumn{4}{|c|}{(+Flow, -MB, +FM)}                               & \multicolumn{4}{|c|}{(+Flow, +MB, +FM)}                             \\
                        \hline
                         $\alpha' $ & Avg.                                     & Best.                                   & Bal.                                    & $t$                                       & Avg.                      & Best.             & Bal.            & $t$             & Avg.            & Best.           & Bal.            & $t$             & Avg.            & Best.           & Bal.            & $t$ \\
                        \hline
                        $16$                               & \numprint{-1.88}                         & \numprint{-1.28}                        & \numprint{1.03}                         & \numprint{4.17}                           & \numprint{0.81}           & \numprint{0.35}   & \numprint{1.02} & \numprint{3.92} & \numprint{6.14} & \numprint{5.44} & \numprint{1.03} & \numprint{4.30} & \numprint{7.21} & \numprint{6.06} & \numprint{1.02} & \numprint{5.01}\\
                        $8$                                & \numprint{-2.30}                         & \numprint{-1.86}                        & \numprint{1.03}                         & \numprint{2.11}                           & \numprint{0.41}           & \numprint{-0.14}  & \numprint{1.02} & \numprint{2.07} & \numprint{5.99} & \numprint{5.40} & \numprint{1.03} & \numprint{2.41} & \numprint{7.06} & \numprint{5.87} & \numprint{1.02} & \numprint{2.72}\\
                        $4$                                & \numprint{-4.86}                         & \numprint{-3.78}                        & \numprint{1.02}                         & \numprint{1.24}                           & \numprint{-2.20}          & \numprint{-2.80}  & \numprint{1.02} & \numprint{1.29} & \numprint{5.27} & \numprint{4.70} & \numprint{1.03} & \numprint{1.62} & \numprint{6.21} & \numprint{5.36} & \numprint{1.02} & \numprint{1.76}\\
                        $2$                                & \numprint{-11.86}                        & \numprint{-10.35}                       & \numprint{1.02}                         & \numprint{0.90}                           & \numprint{-9.16}          & \numprint{-8.24}  & \numprint{1.02} & \numprint{0.96} & \numprint{3.66} & \numprint{3.37} & \numprint{1.02} & \numprint{1.31} & \numprint{4.17} & \numprint{3.82} & \numprint{1.02} & \numprint{1.39}\\
                        $1$                                & \numprint{-19.58}                        & \numprint{-18.26}                       & \numprint{1.02}                         & \numprint{0.76}                           & \numprint{-17.09}         & \numprint{-16.39} & \numprint{1.02} & \numprint{0.80} & \numprint{1.64} & \numprint{1.68} & \numprint{1.02} & \numprint{1.19} & \numprint{1.74} & \numprint{1.75} & \numprint{1.02} & \numprint{1.22}\\
                        \hline
                       Ref. & \multicolumn{4}{c|}{(-Flow, -MB, +FM)}                  & \numprint{2974}                          & \numprint{2851}                         & \numprint{1.025}                        & \numprint{1.13}                           & \multicolumn{8}{|c|}{}\\
                        \hline
\end{tabular}

\end{center}

\caption{The final score of different algorithm configurations compared against the basic two-way FM configuration. The parameter $\alpha'$ is the flow region upper bound factor. All average and best cut values except for the basic configuration are improvements relative to the basic configuration in \%.}
\label{tab:flowcomp}
\end{table}

\begin{table}[h!]
\small
\begin{center}
        
\begin{tabular}{|l||r|r||r|r||r|r||}\hline
			Effectiveness  & \multicolumn{2}{c|}{(+Flow, +MB, -FM)} & \multicolumn{2}{c|}{(+Flow,-MB, +FM)} & \multicolumn{2}{c|}{(+Flow,+MB,+FM)}        \\
                         & Avg. & Best. & Avg. & Best. & Avg. & Best. \\
			\hline 
			$\alpha'=1$ & \numprint{-16.41} & \numprint{-16.35}  & \numprint{1.62} & \numprint{1.52}  & \numprint{1.65} & \numprint{1.63}           \\
			2           & \numprint{-8.26}  & \numprint{-8.07}   & \numprint{3.02} & \numprint{2.83}  & \numprint{3.36} & \numprint{3.25}           \\
			4           & \numprint{-3.05}  & \numprint{-3.08}   & \numprint{4.04} & \numprint{3.82}  & \numprint{4.63} & \numprint{4.36}           \\
			8           & \numprint{-1.12}  & \numprint{-1.34}   & \numprint{4.16} & \numprint{4.13}  & \numprint{4.74} & \numprint{4.64}           \\
			16          & \numprint{-1.29}  & \numprint{-1.27}   & \numprint{3.70} & \numprint{3.86}  & \numprint{4.28} & \numprint{4.36}           \\
			\hline                                                                                                                       
                        (-Flow, -MB, +FM)&\numprint{2833}&\numprint{2803}    & \numprint{2831} & \numprint{2801}  & \numprint{2827} & \numprint{2799}           \\
			\hline

\end{tabular}

\end{center}
\caption{Three effectiveness tests each one with six different algorithm configurations. All average and best cut values except for the basic configuration are improvements relative to the basic configuration in \%.}
\label{tab:flowcompeffrel}
\end{table}

In Table~\ref{tab:flowcomp} we see that by Flow on its own, i.e. no FM-algorithm is used at all, we obtain cuts and run times which are worse than the basic two-way FM configuration. 
The results improve in terms of quality and runtime if we enable the most balanced minimum cut heuristic.
Now for $\alpha'=16$ and $\alpha'=8$, we get cuts that are $0.81 \%$ and $0.41 \%$ lower on average than the cuts produced by the basic two-way FM configuration. 
However, these configurations have still a factor four ($\alpha'=16$) or a factor two ($\alpha'=8$) larger run times. 
In some cases, flows and flows with the MB heuristic are not able to produce results that are comparable to the basic two-way FM configuration. 
Perhaps, this is due to the lack of the method to accept suboptimal cuts which yields small flow problems and therefore bad cuts.  
Consequently, we also combined both methods to fix this problem. 
In Table~\ref{tab:flowcomp} we can see that the combination of flows with local search produces up to $6.14 \%$ lower cuts on average than the basic configuration. 
If we enable the most balancing cut heuristic we get on average $7.21 \%$ lower cuts than the basic configuration. 
Since these configurations are the basic two-way FM configuration augmented by flow algorithms they have an increased run time compared to the basic configuration. 
However, Table~\ref{tab:flowcompeffrel} shows that these combinations are also more effective than the repeated execution of the basic two-way FM configuration.
The most effective configuration is the basic two-way FM configuration using flows with $\alpha'=8$ and uses the most balanced cut heuristic. 
It yields $4.73 \%$ lower cuts than the basic configuration in the effectiveness test. 
Absolute values for the test results can be found in Table~\ref{tab:flowcompabs} and Table~\ref{tab:flowcompeff} in the Appendix. 

\subsection{Insights about Global Search Strategies}

In Table~\ref{tab:globalsearchresults} we compared different global search strategies against a single V-cycle. 
This time we choose a relatively fast configuration of the algorithm as basic configuration since the global search strategies are at focus. 
The coarsening phase is the same as in KaFFPa Strong. We perform one initial partitioning attempt using Scotch. 
The refinement employs $k$-way local search followed by quotient graph style refinements. 
Flow algorithms are not enabled for this test. 
The only parameter varied during this test is the global search strategy. 

Clearly, more sophisticated global search strategies decrease the cut but also increase the runtime of the algorithm.
However, the effectiveness results in Table~\ref{tab:globalsearchresults} indicate that repeated executions of more sophisticated global search strategies are always superior to repeated executions of one single V-cycle. 
The largest difference in best cut effectiveness is obtained by repeated executions of $2$ W-cycles and $2$ F-cycles which produce $1.5 \%$ lower best cuts than repeated executions of a normal V-cycle.

The increased effectiveness of more sophisticated global search strategies is due to different reasons. 
First of all by using a given partition in later cycles we obtain a very good initial partitioning for the coarsest graph.
This initial partitioning is usually much better than a partition created by another initial partitioner which
yields good start points for local improvement on each level of refinement.
Furthermore, the increased effectiveness is due to time saved using the active block strategy which converges very quickly in later cycles. 
On the other hand we save time for initial partitioning which is only performed the first time the algorithm arrives in the initial partitioning phase.  

It is interesting to see that although the analysis in Section~\ref{s:globalsearch} makes some simplified assumptions the measured run times in Table~\ref{tab:globalsearchresults} are very close to the values obtained by the analysis.
\begin{table}[h]
\small
\begin{center}
\begin{tabular}{|l|r|r|r|r||r|r|}\hline

                  Algorithm & Avg. & Best & Bal. &$t$ & Eff. Avg. & Eff. Best \\

                  \hline
                  2 F-cycle        & \numprint{2,69} & \numprint{2,45} & \numprint{1.023} & \numprint{2.31} & \numprint{2806} & \numprint{2760}\\
                  3 V-cycle        & \numprint{2,69} & \numprint{2,34} & \numprint{1.023} & \numprint{2.49} & \numprint{2810} & \numprint{2766}\\
                  2 W-cycle        & \numprint{2,91} & \numprint{2,75} & \numprint{1.024} & \numprint{2.77} & \numprint{2810} & \numprint{2760}\\
                  1 W-cycle        & \numprint{1,33} & \numprint{1,10} & \numprint{1.024} & \numprint{1.38} & \numprint{2815} & \numprint{2773}\\
                  1 F-cycle        & \numprint{1,09} & \numprint{1,00} & \numprint{1.024} & \numprint{1.18} & \numprint{2816} & \numprint{2783}\\
                  2 V-cycle        & \numprint{1,88} & \numprint{1,61} & \numprint{1.024} & \numprint{1.67} & \numprint{2817} & \numprint{2778}\\
                  \hline
                  1 V-cycle        & \numprint{2973} & \numprint{2841} & \numprint{1.024} & \numprint{0.85} & \numprint{2834} & \numprint{2801}\\
                  \hline
\end{tabular}
\end{center} 
\caption{Test results for normal and effectiveness tests for different global search strategies. The average cut and best cut values are improvements in \% relative to the basic configuration (1 V-cycle). For F- and W-cycles $d=2$. Absolute values can be found in Table~\ref{tab:globalsearchresultsabs} in the Appendix. }
\label{tab:globalsearchresults}
\end{table}

\vspace*{-1.3cm}
\subsection{Removal / Knockout Tests}\label{ss:parameters}
We now turn into two kinds of experiments to evaluate interactions and relative importance of our algorithmic improvements. 
In the component \textit{removal tests} we take KaFFPa Strong and remove components step by step yielding weaker and weaker variants of the algorithm.  
For the \textit{knockout tests} only one component is removed at a time, i.e. each variant is exactly the same as KaFFPa Strong minus the specified component. 

In the following, \textit{KWay} means the global $k$-way search component of KaFFPa Strong, \textit{Multitry} stands for the more localized $k$-way search during the active block scheduling algorithm  and \textit{-Cyc} means that the F-Cycle component is replaced by one V-cycle. 
Furthermore, \textit{MB} stands for the most balancing minimum cut heuristic, and \textit{Flow} means the flow based improvement algorithms. 

In Table~\ref{tab:removaltestresultscompressed} we see results for the component removal tests and knockout tests.  
More detailed results can be found in the appendix. 
First notice that in order to achieve high quality partitions we don't need to perform classical global $k$-way refinement (KWay).
The changes in solution quality are negligible and both configurations (Strong without KWay and Strong) are equally effective. 
However, the global $k$-way refinement algorithm converges very quickly and therefore speeds up overall runtime of the algorithm; hence we included it into our KaFFPa Strong configuration.

In both tests the largest differences are obtained when the components Flow and/or the Multitry search heuristic are removed.
When we remove all of our new algorithmic components from KaFFPa Strong, i.e global $k$-way search, local multitry search, F-Cycles, and Flow we obtain a graph partitioner that produces 9.3\% larger cuts than KaFFPa Strong.  
Here the effectiveness average cut of the weakest variant in the removal test is about 6.2\% larger than the effectiveness average cut of KaFFPa Strong. 
Also note that as soon as a component is removed from KaFFPa Strong (except for the global $k$-way search) the algorithm gets less effective. 
\begin{table}[h!]
\small
\begin{center}

\begin{tabular}{|r|rrr||rr| } 
\hline
Variant & Avg. & Best. & $t$ & Eff. Avg. & Eff. Best.  \\
\hline
Strong    & \numprint{2683}  & \numprint{2617}  & \numprint{8.93}       & \numprint{2636} & \numprint{2616}\\
\hline
-KWay     & \numprint{-0.04} & \numprint{-0.11} & \numprint{9.23}       & \numprint{0.00} & \numprint{0.08}\\
-Multitry & \numprint{1.71}  & \numprint{1.49}  & \numprint{5.55}       & \numprint{1.21} & \numprint{1.30}\\
-Cyc      & \numprint{2.42}  & \numprint{1.95}  & \numprint{3.27}       & \numprint{1.25} & \numprint{1.41}\\
-MB       & \numprint{3.35}  & \numprint{2.64}  & \numprint{2.92}       & \numprint{1.82} & \numprint{1.91}\\
-Flow     & \numprint{9.36}  & \numprint{7.87}  & \numprint{1.66}       & \numprint{6.18} & \numprint{6.08}\\
\hline
\end{tabular}

\vspace*{0.3cm}

\begin{tabular}{|r|rrr||rr| } 
\hline
Variant & Avg. & Best. & $t$ & Eff. Avg. & Eff. Best.  \\
\hline
Strong    &   \numprint{2683}  & \numprint{2617}  & \numprint{8.93}    & \numprint{2636} & \numprint{2616} \\
                  \hline
-KWay     &   \numprint{-0,04} & \numprint{-0,11} & \numprint{9,23}    & \numprint{0.00} & \numprint{0.08} \\
-Multitry &   \numprint{1,27}  & \numprint{1,11}  & \numprint{5,52}    & \numprint{0.83} & \numprint{0.99} \\
-MB       &   \numprint{0,26}  & \numprint{0,08}  & \numprint{8,34}    & \numprint{0.11} & \numprint{0.11} \\
-Flow     &   \numprint{1,53}  & \numprint{0,99}  & \numprint{6,33}    & \numprint{0.87} & \numprint{0.80} \\
\hline
\end{tabular}

\end{center}
\caption{Removal tests (top): each configuration is same as its predecessor minus the component shown at beginning of the row. Knockout tests (bottom): each configuration is same as KaFFPa Strong minus the component shown at beginning of the row. All average cuts and best cuts are shown as increases in cut (\%) relative to the values obtained by KaFFPa Strong.}
\label{tab:removaltestresultscompressed}
\end{table}
\vspace*{-1.5cm}
\subsection{Comparison with other Partitioners}\label{ss:others}
We now switch to our suite of larger graphs since that's what \algname\ was
designed for and because we thus avoid the effect of overtuning our algorithm
parameters to the instances used for calibration. 
We compare ourselves with KaSPar Strong, KaPPa Strong, DiBaP Strong, Scotch and Metis. 

Figure~\ref{fig:comparisonothergrafic} summarizes the results.
We excluded the European and German road network 
as well as the Random Geometric Graph for the comparison with DiBaP since DiBaP can't handle singletons. In general, we excluded the case $k=2$ for the European road network for the comparison since it runs out of memory for this case. 
As recommended by Henning Meyerhenke DiBaP was run with 3 bubble repetitions, 10 FOS/L consolidations and 14 FOS/L iterations.
Detailed per instance results can be found in Appendix Table~\ref{tab:detailedperinstancebasis}. 

kMetis produces about 33\% larger cuts than the strong variant of KaFFPa. 
Scotch, DiBaP, KaPPa, and KaSPar produce 20\%,11\%, 12\% and 3\% larger cuts than KaFFPa respectively. 
The strong variant of KaFFPa now produces the average best cut results of KaSPar on average (which where obtained using five repeated executions of KaSPar). In 57 out of 66 cases KaFFPa produces a better best cut than the best cut obtained by KaSPar.  

The largest absolute improvement to KaSPar Strong is obtained on \textit{af\_shell10} at $k=16$ where the best cut produced by KaSPar-Strong is 7.2\% larger than the best cut produced by KaFFPa Strong. 
The largest absolute improvement to kMetis is obtained on the European road network where kMetis produces cuts that are a factor 5.5 larger than the edge cuts produces by our strong configuration. 
\begin{figure}[t]
\begin{center}
\begin{minipage}{6cm}
\includegraphics[width=6cm]{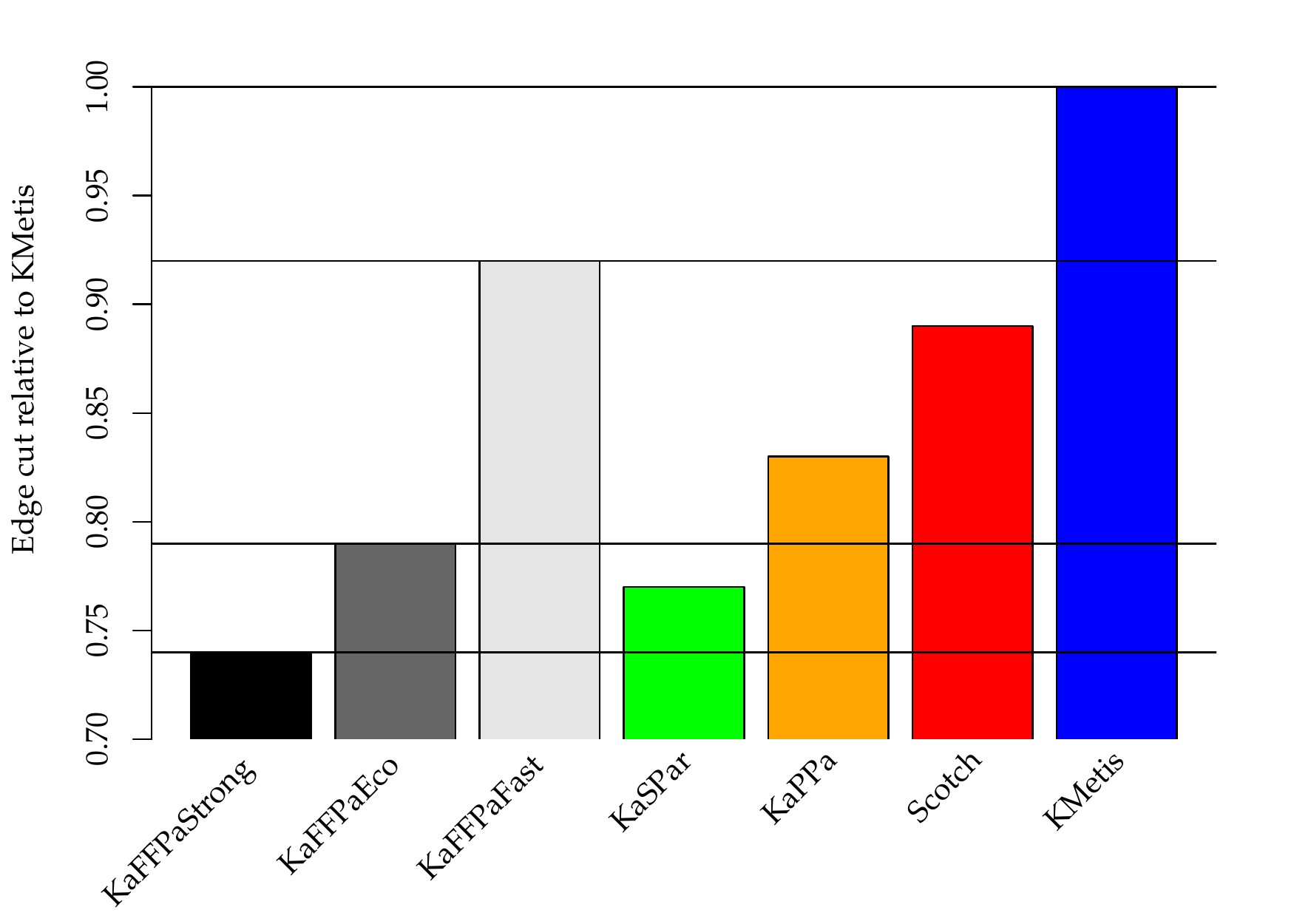} 
\end{minipage}
\begin{minipage}{6cm}
          \begin{tabular}{|l||r|r|r||}
\hline
algorithm & \multicolumn{3}{|c||}{large graphs}\\
             \hline
              &  best   &  avg.   & t[s]  \\\hline
KaFFPa Strong &  12\,054& 12\,182 & 121.50 \\
KaSPar Strong &  12\,450& +3\%    & 87.12 \\
KaFFPa Eco    &  12\,763& +6\%    & 3.82 \\
KaPPa Strong  &  13\,323& +12\%   & 28.16 \\
Scotch        &  14\,218& +20\%   &  3.55 \\ 
KaFFPa Fast   &  15\,124& +24\%   &  0.98 \\
kMetis        &  15\,167& +33\%   &  0.83 \\
              \hline
\end{tabular}
\end{minipage}

\end{center}
        \caption{Averaged quality of the different partitioning algorithms. }
        \label{fig:comparisonothergrafic}
\end{figure}

The eco configuration of KaFFPa now outperforms Scotch and DiBaP being than DiBaP while producing 4.7 \% and 12\% smaller cuts than DiBap and Scotch respectively.
The run time difference to both algorithms gets larger with increasing number of blocks. 
Note that DiBaP has a factor 3 larger run times than KaFFPa Eco on average and up to factor 4 on average for $k=64$.

On the largest graphs available to us (\textit{delaunay, rgg, eur}) KaFFPa Fast outperforms KMetis in terms of quality and runtime. 
For example on the \textit{european road network} kMetis has about 44\% larger run times and produces up to a factor 3 (for $k=16$) larger cuts. 

We now turn into graph sequence tests. Here we take two graph families (\textit{rgg}, \textit{delaunay}) and study the behaviour of our algorithms when the graph size increases. 
In Figure~\ref{fig:regressionrgg}, we see for increasing size of random geometric graphs the run time advantage of KaFFPa Fast relative to kMetis increases.  
The largest difference is obtained on the largest graph where kMetis has 70\% larger run times than our fast configuration which still produces 2.5\% smaller cuts.
We observe the same behaviour for the delaunay based graphs (see appendix for more details). Here we get a run time advantage of up to 24\% with 6.5\% smaller cuts for the largest graph.
Also note that for these graphs the improvement of KaFFPa Strong and Eco in terms of quality relative to kMetis increases with increasing graph size (up to 32\% for delaunay and up to 47\% for rgg for our strong configuration). 

\subsection{The Walshaw Benchmark}\label{ss:benchmark}
We now apply \algname\ to Walshaw's benchmark archive \cite{walshaw2000mpm} using the rules used there, i.e., running time is no
issue but we want to achieve minimal cut values for $k\in \set{2, 4, 8, 16, 32, 64}$ and balance parameters $\epsilon\in\set{0,0.01,0.03,0.05}$.
We tried all combinations except the case $\epsilon=0$ because flows are not made for this case. 

We ran KaFFPa Strong with a time limit of two hours per graph and $k$ and report the best result obtained in the appendix. KaFFPa computed 317 partitions which are better that previous best partitions reported
there: 99 for 1\%, 108 for 3\% and 110 for 5\%. Moreover, it reproduced equally sized cuts in 118 of the 295 remaining cases.
The complete list of improvements is available at Walshaw's archive \cite{walshaw2000mpm}.
We obtain only a few improvements for $k=2$. However, in this case we are able to reproduce the currently best result in 91 out of 102 cases. For the large graphs (using 78000 nodes as a cut off) we obtain cuts that are lower or equal to the current entry in 92\% of the cases. 
The biggest absolute improvement is observed for instance \textit{add32} (for each imbalance) and $k=4$ where the old partitions cut 10 \% more edges. 
The biggest absolute difference is obtained for \textit{m14b} at 3 \% imbalance and $k=64$ where the new partition cuts 3183 less edges. 

After the partitions were accepted, we ran KaFFPa Strong as before and took the previous entry as input. Now in 560 out of 612 cases we where able to improve a given entry or have been able to reproduce the current result.
\begin{figure}[t!]
\centering
\includegraphics[width=6cm]{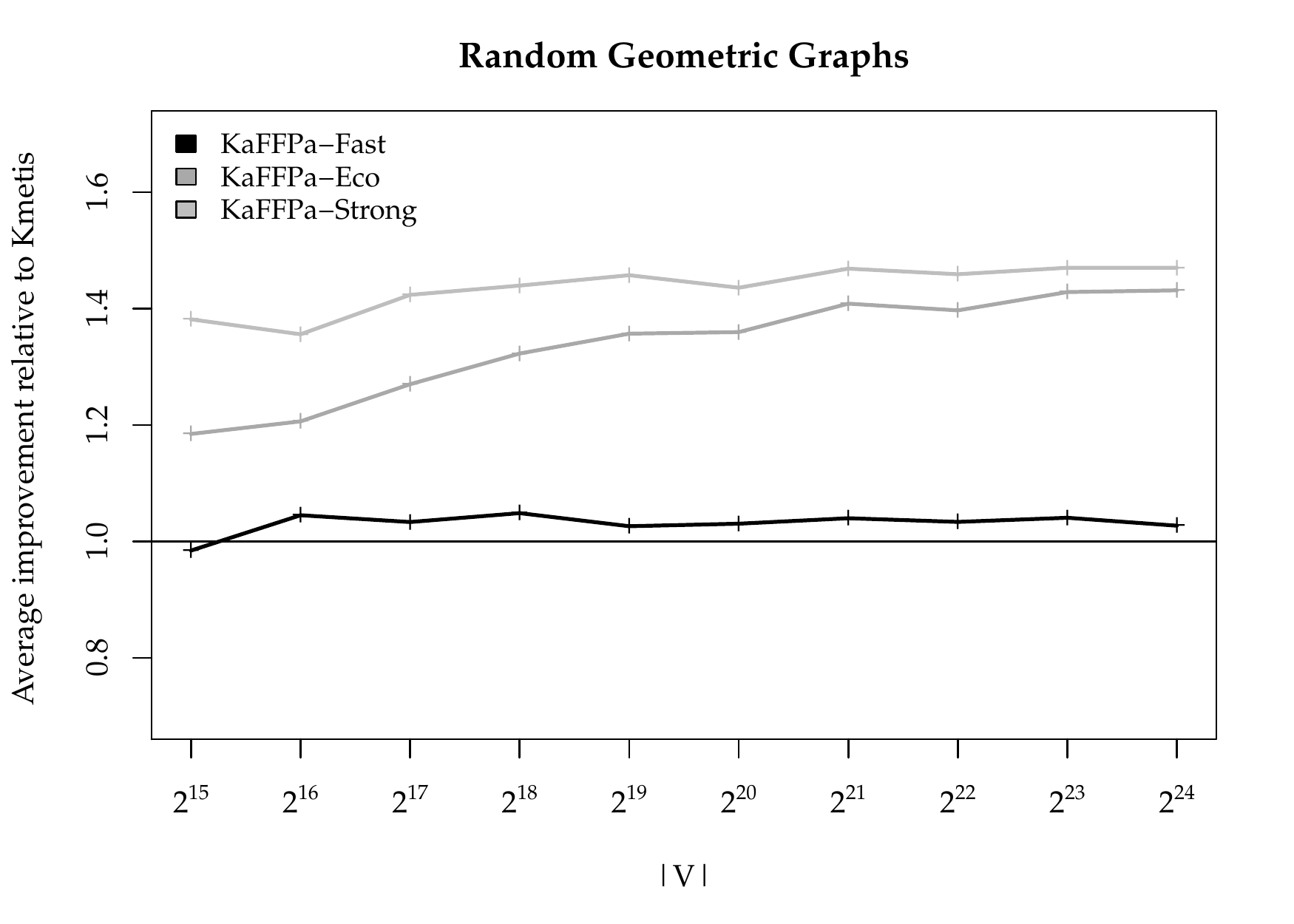}  \includegraphics[width=6cm]{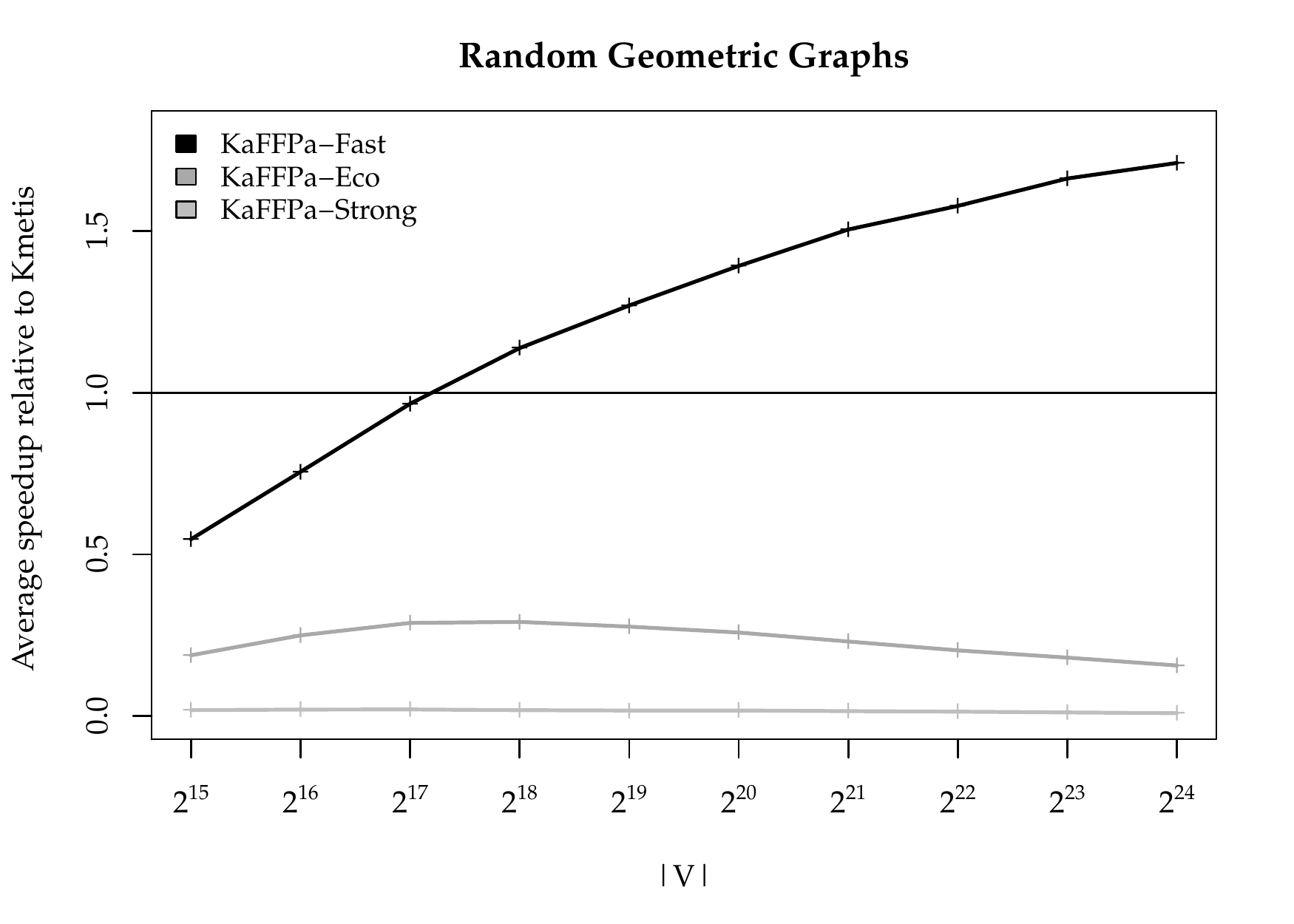}
\caption{Graph sequence test for Random Geometric Graphs.}
\label{fig:regressionrgg}
\end{figure}

\section{Conclusions and Future Work}\label{s:conclusions}
KaFFPa is an approach to graph partitioning which currently computes the best known partitions for many graphs, at least when a certain imbalance is allowed. 
This success is due to new local improvement methods, which are based on max-flow min-cut computations and more localized local searches, and global search strategies which were transferred from multigrid linear solvers.  

A lot of opportunities remain to further improve KaFFPa. 
For example we did not try to handle the case $\epsilon=0$ since this may require different local search strategies. 
Furthermore, we want to try other initial partitioning algorithms and ways to integrate KaFFPa into other metaheuristics like evolutionary search. 

Moreover, we would like to go back to parallel graph partitioning. Note that our max-flow min-cut local improvement methods fit very well into the parallelization scheme of KaPPa \cite{kappa}. 
We also want to combine KaFFPa with the $n$-level idea from KaSPar \cite{kaspar}.
Other refinement algorithms, e.g., based on diffusion or MQI could be
tried within our framework of pairwise refinement.

The current implementation of \algname\ is a research prototype rather than
a widely usable tool. However, we are planing an open source release available for download.  \\

\subsection*{Acknowledgements} We would like to thank Vitaly Osipov for supplying data for KaSPar and Henning Meyerhenke for 
providing a DiBaP-full executable. 
We also thank Tanja Hartmann, Robert G\"orke and Bastian Katz for valuable advice regarding balanced min cuts.

{
\bibliographystyle{plain}
\bibliography{diss,quellen}}

\begin{thebibliography}{10}

\bibitem{andersen2008algorithm}
R.~Andersen and K.J. Lang.
\newblock {An algorithm for improving graph partitions}.
\newblock In {\em Proceedings of the nineteenth annual ACM-SIAM symposium on
  Discrete algorithms}, pages 651--660. Society for Industrial and Applied
  Mathematics, 2008.

\bibitem{bonsma2010most}
P.~Bonsma.
\newblock {Most balanced minimum cuts}.
\newblock {\em Discrete Applied Mathematics}, 158(4):261--276, 2010.

\bibitem{briggs2000multigrid}
W.L. Briggs and S.F. McCormick.
\newblock {\em {A multigrid tutorial}}.
\newblock Society for Industrial Mathematics, 2000.

\bibitem{cheriyan1996algorithms}
J.~Cheriyan and K.~Mehlhorn.
\newblock {Algorithms for dense graphs and networks on the random access
  computer}.
\newblock {\em Algorithmica}, 15(6):521--549, 1996.

\bibitem{cherkassky1997implementing}
B.V. Cherkassky and A.V. Goldberg.
\newblock {On Implementing the Push-Relabel Method for the Maximum Flow
  Problem}.
\newblock {\em Algorithmica}, 19(4):390--410, 1997.

\bibitem{UFsparsematrixcollection}
T.~Davis.
\newblock {The University of Florida Sparse Matrix Collection,
  \url{http://www.cise.ufl.edu/research/sparse/matrices}, 2008}.

\bibitem{delling2010graph}
D.~Delling, A.V. Goldberg, I.~Razenshteyn, and R.F. Werneck.
\newblock {Graph Partitioning with Natural Cuts}.
\newblock Technical report, Microsoft Research, MSR-TR-2010-164, 2010.

\bibitem{DSSW09}
D.~Delling, P.~Sanders, D.~Schultes, and D.~Wagner.
\newblock Engineering route planning algorithms.
\newblock In {\em Algorithmics of Large and Complex Networks}, volume 5515 of
  {\em LNCS State-of-the-Art Survey}, pages 117--139. Springer, 2009.

\bibitem{DH03a}
D.~Drake and S.~Hougardy.
\newblock A simple approximation algorithm for the weighted matching problem.
\newblock {\em Information Processing Letters}, 85:211--213, 2003.

\bibitem{Party}
R.~Preis et~al.
\newblock {PARTY} partitioning library.
\newblock
  {\url{http://wwwcs.uni-paderborn.de/fachbereich/AG/monien/RESEARCH/PART/part%
y.html}}.

\bibitem{Fedorenko61}
R.~P. Fedorenko.
\newblock A relaxation method for solving elliptic difference equations.
\newblock {\em USSR Comput. Math. and Math. Phys}, 5(1):1092--1096, 1961.

\bibitem{feige2006finding}
U.~Feige and M.~Mahdian.
\newblock {Finding small balanced separators}.
\newblock In {\em Proceedings of the thirty-eighth annual ACM symposium on
  Theory of computing}, pages 375--384. ACM, 2006.

\bibitem{fiduccia1982lth}
C.~M. Fiduccia and R.~M. Mattheyses.
\newblock {A Linear-Time Heuristic for Improving Network Partitions}.
\newblock In {\em 19th Conference on Design Automation}, pages 175--181, 1982.

\bibitem{fjallstrom1998agp}
P.O. Fjallstrom.
\newblock {Algorithms for graph partitioning: A survey}.
\newblock {\em Linkoping Electronic Articles in Computer and Information
  Science}, 3(10), 1998.

\bibitem{ford1962flows}
L.~R. Ford and D.~R. Fulkerson.
\newblock {\em {Flows in Networks}}.
\newblock Princeton University Press, 1962.

\bibitem{gabow2000path}
H.N. Gabow.
\newblock {Path-Based Depth-First Search for Strong and Biconnected
  Components}.
\newblock {\em Information Processing Letters}, 74(3-4):107--114, 2000.

\bibitem{Chaco}
B.~Hendrickson.
\newblock Chaco: Software for partitioning graphs.
\newblock {\url{http://www.sandia.gov/~bahendr/ chaco.html}}.

\bibitem{kappa}
M.~Holtgrewe, P.~Sanders, and C.~Schulz.
\newblock {Engineering a Scalable High Quality Graph Partitioner}.
\newblock {\em 24th IEEE International Parallal and Distributed Processing
  Symposium}, 2010.

\bibitem{lang2004flow}
K.~Lang and S.~Rao.
\newblock {A flow-based method for improving the expansion or conductance of
  graph cuts}.
\newblock {\em Integer Programming and Combinatorial Optimization}, pages
  383--400, 2004.

\bibitem{MauSan07}
J.~Maue and P.~Sanders.
\newblock Engineering algorithms for approximate weighted matching.
\newblock In {\em 6th Workshop on Exp. Algorithms ({WEA})}, volume 4525 of {\em
  LNCS}, pages 242--255. Springer, 2007.

\bibitem{meyerhenke2008ndb}
H.~Meyerhenke, B.~Monien, and T.~Sauerwald.
\newblock {A new diffusion-based multilevel algorithm for computing graph
  partitions of very high quality}.
\newblock In {\em IEEE International Symposium on Parallel and Distributed
  Processing, 2008. IPDPS 2008.}, pages 1--13, 2008.

\bibitem{kaspar}
V.~Osipov and P.~Sanders.
\newblock {n-Level Graph Partitioning}.
\newblock {\em 18th European Symposium on Algorithms (see also arxiv preprint
  arXiv:1004.4024)}, 2010.

\bibitem{Scotch}
F.~Pellegrini.
\newblock Scotch home page.
\newblock {\url{http://www. labri.fr/pelegrin/scotch}}.

\bibitem{picard1980structure}
J.C. Picard and M.~Queyranne.
\newblock {On the structure of all minimum cuts in a network and applications}.
\newblock {\em Mathematical Programming Studies, Volume 13}, pages 8--16, 1980.

\bibitem{SchKarKum00}
K.~Schloegel, G.~Karypis, and V.~Kumar.
\newblock Graph partitioning for high performance scientific simulations.
\newblock In J.~Dongarra et~al., editor, {\em CRPC Par. Comp. Handbook}. Morgan
  Kaufmann, 2000.

\bibitem{Sou35}
R.~V. Southwell.
\newblock Stress-calculation in frameworks by the method of {``Systematic
  relaxation of constraints''}.
\newblock {\em Proc. Roy. Soc. Edinburgh Sect. A}, pages 57--91, 1935.

\bibitem{toulouse1999multi}
M.~Toulouse, K.~Thulasiraman, and F.~Glover.
\newblock {Multi-level cooperative search: A new paradigm for combinatorial
  optimization and an application to graph partitioning}.
\newblock {\em Euro-Par 99 Parallel Processing}, pages 533--542, 1999.

\bibitem{Walshawbench}
C.~Walshaw.
\newblock {The Graph Partitioning Archive,
  \url{http://staffweb.cms.gre.ac.uk/~c.walshaw/partition/}, 2008}.

\bibitem{walshaw2004multilevel}
C.~Walshaw.
\newblock {Multilevel refinement for combinatorial optimisation problems}.
\newblock {\em Annals of Operations Research}, 131(1):325--372, 2004.

\bibitem{walshaw2000mpm}
C.~Walshaw and M.~Cross.
\newblock {Mesh Partitioning: A Multilevel Balancing and Refinement Algorithm}.
\newblock {\em SIAM Journal on Scientific Computing}, 22(1):63--80, 2000.

\bibitem{Walshaw07}
C.~Walshaw and M.~Cross.
\newblock {JOSTLE: Parallel Multilevel Graph-Partitioning Software -- An
  Overview}.
\newblock In F.~Magoules, editor, {\em {Mesh Partitioning Techniques and Domain
  Decomposition Techniques}}, pages 27--58. Civil-Comp Ltd., 2007.
\newblock (Invited chapter).

\end{thebibliography}
\appendix
\vfill
\pagebreak
\begin{figure*}[h!]
\small
\begin{center}
        \begin{algorithmic}
        \STATE \textbf{procedure} \textit{W-Cycle}(G)
        \STATE \quad $G' = $coarsen$(G)$
        \STATE \quad \textbf{if} $G'$ small enough \textbf{then}
        \STATE \quad \quad initial partition $G'$ if not partitioned
        \STATE \quad \quad apply partition of $G'$ to $G$
        \STATE \quad \quad perform refinement on $G$
        \STATE \quad \textbf{else}
        \STATE \quad \quad W-Cycle($G'$) and apply partition to $G$

        \STATE  \quad \quad  perform refinement on $G$
        \STATE  \quad \quad $G'' = $coarsen$(G)$
        \STATE  \quad \quad W-Cycle($G''$) and apply partition to $G$
        \STATE  \quad \quad perform refinement on $G$
\end{algorithmic}
        \begin{algorithmic}
        \STATE \textbf{procedure} \textit{F-Cycle}(G)
        \STATE \quad $G' = $coarsen$(G)$
        \STATE \quad \textbf{if} $G'$ small enough \textbf{then}
        \STATE \quad \quad initial partition $G'$ if not partitioned
        \STATE \quad \quad apply partition of $G'$ to $G$
        \STATE \quad \quad perform refinement on $G$
        \STATE \quad \textbf{else}
        \STATE \quad \quad F-Cycle($G'$) and apply partition to $G$
        \STATE  \quad \quad  perform refinement on $G$
        \STATE  \quad \quad \textbf{if} no. trails. calls on cur. level < 2 \textbf{then}
        \STATE  \quad \quad \quad  $G'' = $coarsen$(G)$
        \STATE  \quad \quad \quad F-Cycle($G''$) and apply partition to $G$
        \STATE  \quad \quad \quad perform refinement on $G$
        \end{algorithmic}
\end{center}
\caption{Pseudocode for the different global search strategies.}
\label{fig:globalsearchpseudocode}
\end{figure*}
\begin{figure}[h!]
\begin{algorithmic}
\STATE \textbf{procedure} \textit{activeBlockScheduling}()
\STATE   \quad set all blocks active
\STATE   \quad \textbf{while} there are active blocks
\STATE   \quad \quad A := <edge (u,v) in quotient graph : u active or v active>
\STATE   \quad \quad set all blocks inactive
\STATE   \quad \quad \textbf{permute} A randomly
\STATE   \quad \quad  \textbf{for each} (u,v) in A \textbf{do}
\STATE   \quad \quad  \quad   pairWiseImprovement(u,v) 
\STATE   \quad \quad  \quad   multitry FM search starting with boundary of u and v
\STATE   \quad \quad  \quad \textbf{if} anything changed during local search \textbf{then}
\STATE   \quad \quad  \quad  \quad  activate blocks that have changed during pairwise 
\STATE   \quad \quad  \quad  \quad  or multitry FM search
\end{algorithmic}
\caption{Pseudocode for the active block scheduling algorithm. In our implementation the pairwise improvement step starts with a FM local search which is followed by a max-flow min-cut based improvement.}
\label{fig:activeblockscheduling}
\end{figure}

\newpage
 \begin{table}
\small
\begin{center}


\end{center}
\caption{Results for our large benchmark suite. The table on top contains average values for the comparison with DiBaP on our large testsuite without road networks and rgg.
         The table on the bottom contains average value for the comparisons with other general purpose partitioners on our large testsuite without the road network Europe for the case $k=2$.
         The average values are computed as described in Section~\ref{s:experiments}.}
\end{table*}
\begin{figure*}
\centering
\includegraphics[width=7cm]{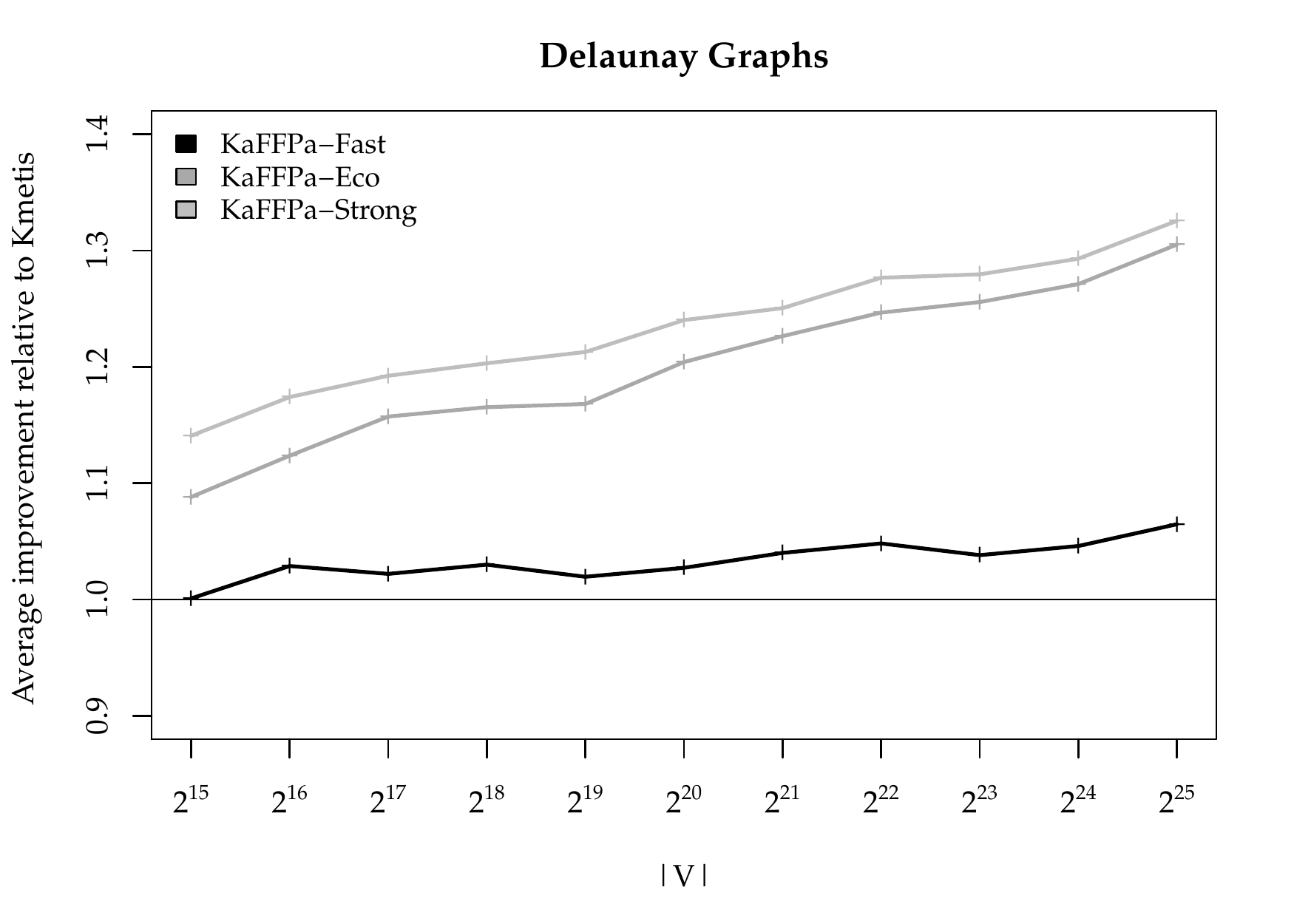}  \includegraphics[width=7cm]{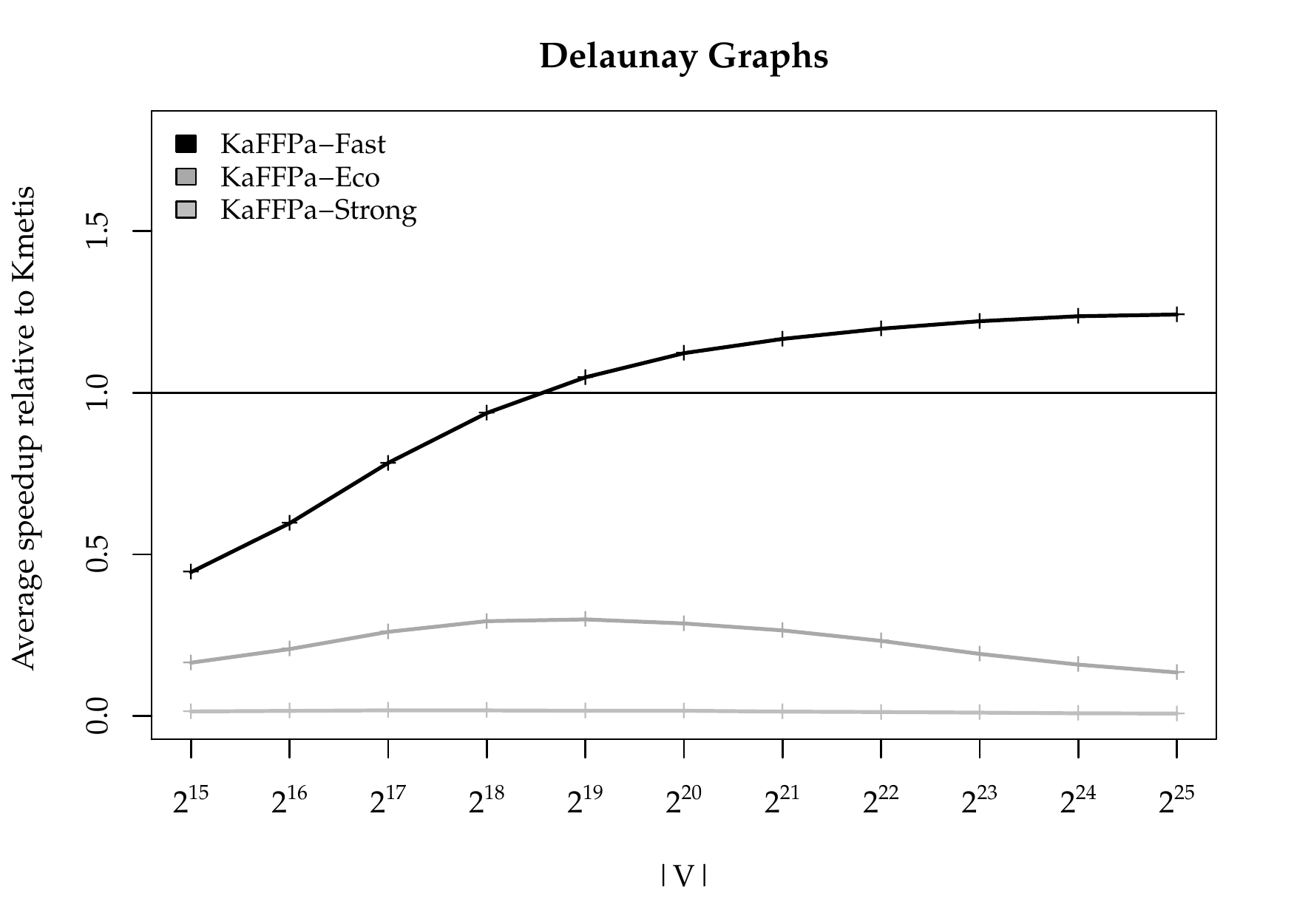}
\caption{Graph sequence test for Delaunay Graphs.}
\label{fig:}
\end{figure*}


\begin{landscape}
\begin{table}[H]
\begin{center}
\begin{tabular}{|l||r|r||r|r||r|r||r|r||r|r||r|r|}\hline

 Graph/$k$  & \multicolumn{2}{|c|}{2} & \multicolumn{2}{|c|}{4} & \multicolumn{2}{|c|}{8} & \multicolumn{2}{|c|}{16} & \multicolumn{2}{|c|}{32} & \multicolumn{2}{|c|}{64}\\

\hline 
3elt &       \textbf{\numprint{89}} & \numprint{89} & \textbf{\numprint{199}} & \numprint{199} & \textbf{\numprint{342}} & \numprint{342} & \numprint{571} & \textbf{\numprint{569}} & \numprint{987} & \textbf{\numprint{969}} & \numprint{1595} & \textbf{\numprint{1564}} \\
add20 &       \numprint{678} & \textbf{\numprint{594}} & \numprint{1197} & \textbf{\numprint{1177}} & \numprint{1740} & \textbf{\numprint{1704}} & \numprint{2156} & \textbf{\numprint{2121}} & \textbf{\numprint{2565}} & \numprint{2687} & \textbf{\numprint{3071}} & \numprint{3236} \\
data &       \textbf{\numprint{188}} & \numprint{188} & \textbf{\numprint{378}} & \numprint{383} & \textbf{\numprint{659}} & \numprint{660} & \numprint{1170} & \textbf{\numprint{1162}} & \numprint{2002} & \textbf{\numprint{1865}} & \numprint{2954} & \textbf{\numprint{2885}} \\
uk &       \textbf{\numprint{19}} & \numprint{19} & \textbf{\numprint{40}} & \numprint{41} & \textbf{\numprint{82}} & \numprint{84} & \textbf{\numprint{150}} & \numprint{152} & \numprint{260} & \textbf{\numprint{258}} & \textbf{\numprint{431}} & \numprint{438} \\
add32 &       \textbf{\numprint{10}} & \numprint{10} & \textbf{\numprint{30}} & \numprint{33} & \textbf{\numprint{66}} & \numprint{66} & \textbf{\numprint{117}} & \numprint{117} & \textbf{\numprint{212}} & \numprint{212} & \numprint{498} & \textbf{\numprint{493}} \\
bcsstk33 &       \textbf{\numprint{10097}} & \numprint{10097} & \numprint{21556} & \textbf{\numprint{21508}} & \numprint{34183} & \textbf{\numprint{34178}} & \numprint{55447} & \textbf{\numprint{54860}} & \numprint{79324} & \textbf{\numprint{78132}} & \numprint{110656} & \textbf{\numprint{108505}} \\
whitaker3 &       \textbf{\numprint{126}} & \numprint{126} & \textbf{\numprint{380}} & \numprint{380} & \textbf{\numprint{655}} & \numprint{656} & \numprint{1105} & \textbf{\numprint{1093}} & \textbf{\numprint{1700}} & \numprint{1717} & \numprint{2588} & \textbf{\numprint{2567}} \\
crack &       \textbf{\numprint{183}} & \numprint{183} & \textbf{\numprint{362}} & \numprint{362} & \textbf{\numprint{677}} & \numprint{678} & \numprint{1109} & \textbf{\numprint{1092}} & \numprint{1720} & \textbf{\numprint{1707}} & \numprint{2620} & \textbf{\numprint{2566}} \\
wing\_nodal &       \textbf{\numprint{1695}} & \numprint{1696} & \numprint{3576} & \textbf{\numprint{3572}} & \numprint{5445} & \textbf{\numprint{5443}} & \textbf{\numprint{8417}} & \numprint{8422} & \numprint{12129} & \textbf{\numprint{11980}} & \numprint{16332} & \textbf{\numprint{16134}} \\
fe\_4elt2 &       \textbf{\numprint{130}} & \numprint{130} & \textbf{\numprint{349}} & \numprint{349} & \textbf{\numprint{605}} & \numprint{605} & \textbf{\numprint{1006}} & \numprint{1014} & \textbf{\numprint{1647}} & \numprint{1657} & \numprint{2575} & \textbf{\numprint{2537}} \\
vibrobox &       \numprint{11538} & \textbf{\numprint{10310}} & \textbf{\numprint{19155}} & \numprint{19199} & \numprint{24702} & \textbf{\numprint{24553}} & \numprint{34384} & \textbf{\numprint{32167}} & \numprint{42711} & \textbf{\numprint{41399}} & \numprint{49924} & \textbf{\numprint{49521}} \\
bcsstk29 &       \textbf{\numprint{2818}} & \numprint{2818} & \numprint{8070} & \textbf{\numprint{8035}} & \numprint{14291} & \textbf{\numprint{13965}} & \numprint{23280} & \textbf{\numprint{21768}} & \numprint{36125} & \textbf{\numprint{34886}} & \numprint{58613} & \textbf{\numprint{57054}} \\
4elt &       \textbf{\numprint{138}} & \numprint{138} & \textbf{\numprint{320}} & \numprint{321} & \textbf{\numprint{534}} & \numprint{534} & \textbf{\numprint{938}} & \numprint{939} & \numprint{1576} & \textbf{\numprint{1559}} & \numprint{2623} & \textbf{\numprint{2596}} \\
fe\_sphere &       \textbf{\numprint{386}} & \numprint{386} & \textbf{\numprint{766}} & \numprint{768} & \textbf{\numprint{1152}} & \numprint{1152} & \textbf{\numprint{1710}} & \numprint{1730} & \textbf{\numprint{2520}} & \numprint{2565} & \numprint{3670} & \textbf{\numprint{3663}} \\
cti &       \textbf{\numprint{318}} & \numprint{318} & \textbf{\numprint{944}} & \numprint{944} & \textbf{\numprint{1752}} & \numprint{1802} & \textbf{\numprint{2865}} & \numprint{2906} & \textbf{\numprint{4180}} & \numprint{4223} & \numprint{6016} & \textbf{\numprint{5875}} \\
memplus &       \numprint{5596} & \textbf{\numprint{5489}} & \numprint{9805} & \textbf{\numprint{9559}} & \numprint{12126} & \textbf{\numprint{11785}} & \numprint{13564} & \textbf{\numprint{13241}} & \numprint{15232} & \textbf{\numprint{14395}} & \numprint{17595} & \textbf{\numprint{16857}} \\
cs4 &       \textbf{\numprint{366}} & \numprint{367} & \textbf{\numprint{938}} & \numprint{940} & \textbf{\numprint{1455}} & \numprint{1467} & \textbf{\numprint{2124}} & \numprint{2195} & \textbf{\numprint{2990}} & \numprint{3048} & \textbf{\numprint{4141}} & \numprint{4154} \\
bcsstk31 &       \textbf{\numprint{2699}} & \numprint{2701} & \textbf{\numprint{7296}} & \numprint{7444} & \textbf{\numprint{13274}} & \numprint{13371} & \numprint{24546} & \textbf{\numprint{24277}} & \numprint{38860} & \textbf{\numprint{38086}} & \numprint{60612} & \textbf{\numprint{60528}} \\
fe\_pwt &       \textbf{\numprint{340}} & \numprint{340} & \textbf{\numprint{704}} & \numprint{704} & \textbf{\numprint{1437}} & \numprint{1441} & \textbf{\numprint{2799}} & \numprint{2806} & \textbf{\numprint{5552}} & \numprint{5612} & \textbf{\numprint{8314}} & \numprint{8454} \\
bcsstk32 &       \textbf{\numprint{4667}} & \numprint{4667} & \textbf{\numprint{9208}} & \numprint{9247} & \numprint{21253} & \textbf{\numprint{20855}} & \textbf{\numprint{36968}} & \numprint{37372} & \numprint{62994} & \textbf{\numprint{61144}} & \numprint{97299} & \textbf{\numprint{95199}} \\
fe\_body &       \textbf{\numprint{262}} & \numprint{262} & \textbf{\numprint{598}} & \numprint{599} & \textbf{\numprint{1040}} & \numprint{1079} & \textbf{\numprint{1806}} & \numprint{1858} & \textbf{\numprint{2968}} & \numprint{3202} & \textbf{\numprint{5057}} & \numprint{5282} \\
t60k &       \textbf{\numprint{75}} & \numprint{75} & \textbf{\numprint{208}} & \numprint{211} & \textbf{\numprint{454}} & \numprint{465} & \textbf{\numprint{818}} & \numprint{849} & \textbf{\numprint{1361}} & \numprint{1391} & \textbf{\numprint{2143}} & \numprint{2211} \\
wing &       \textbf{\numprint{784}} & \numprint{787} & \textbf{\numprint{1616}} & \numprint{1666} & \textbf{\numprint{2509}} & \numprint{2589} & \textbf{\numprint{3889}} & \numprint{4131} & \textbf{\numprint{5747}} & \numprint{5902} & \textbf{\numprint{7842}} & \numprint{8132} \\
brack2 &       \textbf{\numprint{708}} & \numprint{708} & \textbf{\numprint{3013}} & \numprint{3027} & \textbf{\numprint{7110}} & \numprint{7144} & \textbf{\numprint{11745}} & \numprint{11969} & \textbf{\numprint{17751}} & \numprint{17798} & \numprint{26766} & \textbf{\numprint{26557}} \\
finan512 &       \textbf{\numprint{162}} & \numprint{162} & \textbf{\numprint{324}} & \numprint{324} & \textbf{\numprint{648}} & \numprint{648} & \textbf{\numprint{1296}} & \numprint{1296} & \textbf{\numprint{2592}} & \numprint{2592} & \numprint{10752} & \textbf{\numprint{10560}} \\
\hline
fe\_tooth &       \textbf{\numprint{3815}} & \numprint{3819} & \textbf{\numprint{6870}} & \numprint{6938} & \textbf{\numprint{11492}} & \numprint{11650} & \textbf{\numprint{17592}} & \numprint{18115} & \textbf{\numprint{25695}} & \numprint{25977} & \textbf{\numprint{35722}} & \numprint{35980} \\
fe\_rotor &       \textbf{\numprint{2031}} & \numprint{2045} & \numprint{7538} & \textbf{\numprint{7405}} & \numprint{13032} & \textbf{\numprint{12959}} & \numprint{20888} & \textbf{\numprint{20773}} & \textbf{\numprint{32678}} & \numprint{32783} & \numprint{47980} & \textbf{\numprint{47461}} \\
598a &       \textbf{\numprint{2388}} & \numprint{2388} & \textbf{\numprint{7956}} & \numprint{7992} & \textbf{\numprint{16050}} & \numprint{16179} & \textbf{\numprint{25892}} & \numprint{26196} & \textbf{\numprint{40003}} & \numprint{40513} & \textbf{\numprint{57795}} & \numprint{59098} \\
fe\_ocean &       \textbf{\numprint{387}} & \numprint{387} & \textbf{\numprint{1831}} & \numprint{1856} & \textbf{\numprint{4140}} & \numprint{4251} & \textbf{\numprint{8035}} & \numprint{8276} & \textbf{\numprint{13224}} & \numprint{13660} & \textbf{\numprint{20828}} & \numprint{21548} \\
144 &       \textbf{\numprint{6478}} & \numprint{6479} & \numprint{15635} & \textbf{\numprint{15196}} & \textbf{\numprint{25281}} & \numprint{25455} & \textbf{\numprint{38221}} & \numprint{38940} & \textbf{\numprint{56897}} & \numprint{58126} & \textbf{\numprint{80451}} & \numprint{81145} \\
wave &       \textbf{\numprint{8665}} & \numprint{8682} & \textbf{\numprint{16881}} & \numprint{16891} & \textbf{\numprint{29124}} & \numprint{29207} & \textbf{\numprint{43027}} & \numprint{43697} & \textbf{\numprint{62567}} & \numprint{64198} & \textbf{\numprint{86127}} & \numprint{88863} \\
m14b &       \textbf{\numprint{3826}} & \numprint{3826} & \textbf{\numprint{12981}} & \numprint{13034} & \textbf{\numprint{25854}} & \numprint{25921} & \textbf{\numprint{42358}} & \numprint{42513} & \textbf{\numprint{67454}} & \numprint{67770} & \textbf{\numprint{99661}} & \numprint{101551} \\
auto &       \textbf{\numprint{9958}} & \numprint{10004} & \textbf{\numprint{26669}} & \numprint{26941} & \numprint{45892} & \textbf{\numprint{45731}} &\textbf{\numprint{77163}} & \numprint{77618} & \textbf{\numprint{121645}} & \numprint{123296} & \textbf{\numprint{174527}} & \numprint{175975} \\
 \hline

\end{tabular}
 \end{center} \caption{Computing partitions from scratch $\epsilon = 1$\%. In each $k$-column the results computed by KaFFPa are on the left and the current Walshaw cuts are presented on the right side. }
\end{table}
\end{landscape}

\begin{landscape}
\begin{table}[H]
\begin{center}
\begin{tabular}{|l||r|r||r|r||r|r||r|r||r|r||r|r|}\hline

 Graph/$k$  & \multicolumn{2}{|c|}{2} & \multicolumn{2}{|c|}{4} & \multicolumn{2}{|c|}{8} & \multicolumn{2}{|c|}{16} & \multicolumn{2}{|c|}{32} & \multicolumn{2}{|c|}{64}\\

\hline 

3elt &       \textbf{\numprint{87}} & \numprint{87} & \textbf{\numprint{198}} & \numprint{198} & \textbf{\numprint{335}} & \numprint{336} & \textbf{\numprint{563}} & \numprint{565} & \numprint{962} & \textbf{\numprint{958}} & \numprint{1558} & \textbf{\numprint{1542}} \\
add20 &       \numprint{702} & \textbf{\numprint{576}} & \numprint{1186} & \textbf{\numprint{1158}} & \numprint{1724} & \textbf{\numprint{1690}} & \numprint{2104} & \textbf{\numprint{2095}} & \textbf{\numprint{2490}} & \numprint{2493} & \textbf{\numprint{3035}} & \numprint{3152} \\
data &       \textbf{\numprint{185}} & \numprint{185} & \textbf{\numprint{369}} & \numprint{378} & \textbf{\numprint{640}} & \numprint{650} & \textbf{\numprint{1127}} & \numprint{1133} & \numprint{1846} & \textbf{\numprint{1802}} & \numprint{2922} & \textbf{\numprint{2809}} \\
uk &       \textbf{\numprint{18}} & \numprint{18} & \textbf{\numprint{39}} & \numprint{40} & \textbf{\numprint{78}} & \numprint{81} & \textbf{\numprint{141}} & \numprint{148} & \textbf{\numprint{245}} & \numprint{251} & \numprint{418} & \textbf{\numprint{414}} \\
add32 &       \textbf{\numprint{10}} & \numprint{10} & \textbf{\numprint{30}} & \numprint{33} & \textbf{\numprint{66}} & \numprint{66} & \textbf{\numprint{117}} & \numprint{117} & \textbf{\numprint{212}} & \numprint{212} & \numprint{496} & \textbf{\numprint{493}} \\
bcsstk33 &       \textbf{\numprint{10064}} & \numprint{10064} & \textbf{\numprint{20865}} & \numprint{21035} & \textbf{\numprint{34078}} & \numprint{34078} & \numprint{54847} & \textbf{\numprint{54510}} & \numprint{78129} & \textbf{\numprint{77672}} & \numprint{108668} & \textbf{\numprint{107012}} \\
whitaker3 &       \textbf{\numprint{126}} & \numprint{126} & \textbf{\numprint{378}} & \numprint{378} & \textbf{\numprint{652}} & \numprint{655} & \textbf{\numprint{1090}} & \numprint{1092} & \textbf{\numprint{1680}} & \numprint{1686} & \numprint{2539} & \textbf{\numprint{2535}} \\
crack &       \textbf{\numprint{182}} & \numprint{182} & \textbf{\numprint{360}} & \numprint{360} & \textbf{\numprint{673}} & \numprint{676} & \numprint{1086} & \textbf{\numprint{1082}} & \numprint{1692} & \textbf{\numprint{1679}} & \numprint{2561} & \textbf{\numprint{2553}} \\
wing\_nodal &       \textbf{\numprint{1678}} & \numprint{1680} & \textbf{\numprint{3545}} & \numprint{3561} & \textbf{\numprint{5374}} & \numprint{5401} & \textbf{\numprint{8315}} & \numprint{8316} & \numprint{11963} & \textbf{\numprint{11938}} & \numprint{16097} & \textbf{\numprint{15971}} \\
fe\_4elt2 &       \textbf{\numprint{130}} & \numprint{130} & \textbf{\numprint{342}} & \numprint{343} & \textbf{\numprint{597}} & \numprint{598} & \textbf{\numprint{996}} & \numprint{1007} & \textbf{\numprint{1621}} & \numprint{1633} & \textbf{\numprint{2513}} & \numprint{2527} \\
vibrobox &       \numprint{11538} & \textbf{\numprint{10310}} & \numprint{18975} & \textbf{\numprint{18778}} & \numprint{24268} & \textbf{\numprint{24171}} & \numprint{33721} & \textbf{\numprint{31516}} & \numprint{42159} & \textbf{\numprint{39592}} & \numprint{49270} & \textbf{\numprint{49123}} \\
bcsstk29 &       \textbf{\numprint{2818}} & \numprint{2818} & \numprint{7993} & \textbf{\numprint{7983}} & \numprint{13867} & \textbf{\numprint{13817}} & \numprint{22494} & \textbf{\numprint{21410}} & \numprint{34892} & \textbf{\numprint{34407}} & \numprint{56682} & \textbf{\numprint{55366}} \\
4elt &       \textbf{\numprint{137}} & \numprint{137} & \textbf{\numprint{319}} & \numprint{319} & \textbf{\numprint{523}} & \numprint{523} & \numprint{918} & \textbf{\numprint{914}} & \numprint{1539} & \textbf{\numprint{1537}} & \textbf{\numprint{2570}} & \numprint{2581} \\
fe\_sphere &       \textbf{\numprint{384}} & \numprint{384} & \textbf{\numprint{764}} & \numprint{764} & \textbf{\numprint{1152}} & \numprint{1152} & \textbf{\numprint{1705}} & \numprint{1706} & \numprint{2483} & \textbf{\numprint{2477}} & \numprint{3568} & \textbf{\numprint{3547}} \\
cti &       \textbf{\numprint{318}} & \numprint{318} & \textbf{\numprint{916}} & \numprint{917} & \textbf{\numprint{1714}} & \numprint{1716} & \textbf{\numprint{2773}} & \numprint{2778} & \textbf{\numprint{4029}} & \numprint{4132} & \textbf{\numprint{5683}} & \numprint{5763} \\
memplus &       \numprint{5466} & \textbf{\numprint{5355}} & \numprint{9593} & \textbf{\numprint{9418}} & \numprint{12085} & \textbf{\numprint{11628}} & \numprint{13384} & \textbf{\numprint{13130}} & \numprint{15124} & \textbf{\numprint{14264}} & \numprint{17183} & \textbf{\numprint{16724}} \\
cs4 &       \textbf{\numprint{360}} & \numprint{361} & \textbf{\numprint{928}} & \numprint{936} & \textbf{\numprint{1439}} & \numprint{1467} & \textbf{\numprint{2090}} & \numprint{2126} & \textbf{\numprint{2935}} & \numprint{3014} & \textbf{\numprint{4080}} & \numprint{4107} \\
bcsstk31 &       \textbf{\numprint{2676}} & \numprint{2676} & \textbf{\numprint{7150}} & \numprint{7181} & \textbf{\numprint{13020}} & \numprint{13246} & \numprint{23536} & \textbf{\numprint{23504}} & \numprint{38048} & \textbf{\numprint{37459}} & \numprint{58738} & \textbf{\numprint{58667}} \\
fe\_pwt &       \textbf{\numprint{340}} & \numprint{340} & \textbf{\numprint{700}} & \numprint{704} & \textbf{\numprint{1411}} & \numprint{1416} & \textbf{\numprint{2776}} & \numprint{2784} & \textbf{\numprint{5496}} & \numprint{5606} & \textbf{\numprint{8228}} & \numprint{8346} \\
bcsstk32 &       \textbf{\numprint{4667}} & \numprint{4667} & \textbf{\numprint{8742}} & \numprint{8778} & \numprint{20223} & \textbf{\numprint{20035}} & \textbf{\numprint{35572}} & \numprint{35788} & \numprint{60766} & \textbf{\numprint{59824}} & \textbf{\numprint{92094}} & \numprint{92690} \\
fe\_body &       \textbf{\numprint{262}} & \numprint{262} & \textbf{\numprint{598}} & \numprint{598} & \textbf{\numprint{1016}} & \numprint{1033} & \textbf{\numprint{1734}} & \numprint{1767} & \textbf{\numprint{2810}} & \numprint{2906} & \textbf{\numprint{4799}} & \numprint{4982} \\
t60k &       \textbf{\numprint{71}} & \numprint{71} & \textbf{\numprint{203}} & \numprint{207} & \textbf{\numprint{449}} & \numprint{454} & \textbf{\numprint{805}} & \numprint{822} & \textbf{\numprint{1343}} & \numprint{1391} & \textbf{\numprint{2115}} & \numprint{2198} \\
wing &       \textbf{\numprint{773}} & \numprint{774} & \textbf{\numprint{1605}} & \numprint{1636} & \textbf{\numprint{2471}} & \numprint{2551} & \textbf{\numprint{3862}} & \numprint{4015} & \textbf{\numprint{5645}} & \numprint{5832} & \textbf{\numprint{7727}} & \numprint{8043} \\
brack2 &       \textbf{\numprint{684}} & \numprint{684} & \textbf{\numprint{2834}} & \numprint{2839} & \textbf{\numprint{6871}} & \numprint{6980} & \textbf{\numprint{11462}} & \numprint{11622} & \textbf{\numprint{17211}} & \numprint{17491} & \textbf{\numprint{26026}} & \numprint{26366} \\
finan512 &       \textbf{\numprint{162}} & \numprint{162} & \textbf{\numprint{324}} & \numprint{324} & \textbf{\numprint{648}} & \numprint{648} & \textbf{\numprint{1296}} & \numprint{1296} & \textbf{\numprint{2592}} & \numprint{2592} & \numprint{10629} & \textbf{\numprint{10560}} \\
\hline
fe\_tooth &       \textbf{\numprint{3788}} & \numprint{3792} & \textbf{\numprint{6796}} & \numprint{6862} & \textbf{\numprint{11313}} & \numprint{11422} & \textbf{\numprint{17318}} & \numprint{17655} & \textbf{\numprint{25208}} & \numprint{25624} & \textbf{\numprint{35044}} & \numprint{35830} \\
fe\_rotor &       \textbf{\numprint{1959}} & \numprint{1960} & \textbf{\numprint{7128}} & \numprint{7182} & \textbf{\numprint{12479}} & \numprint{12546} & \numprint{20397} & \textbf{\numprint{20356}} & \textbf{\numprint{31345}} & \numprint{31763} & \textbf{\numprint{46783}} & \numprint{47049} \\
598a &       \textbf{\numprint{2367}} & \numprint{2367} & \textbf{\numprint{7842}} & \numprint{7873} & \textbf{\numprint{15740}} & \numprint{15820} & \textbf{\numprint{25704}} & \numprint{25927} & \textbf{\numprint{38803}} & \numprint{39525} & \textbf{\numprint{57070}} & \numprint{58101} \\
fe\_ocean &       \textbf{\numprint{311}} & \numprint{311} & \textbf{\numprint{1696}} & \numprint{1698} & \textbf{\numprint{3921}} & \numprint{3974} & \textbf{\numprint{7648}} & \numprint{7838} & \textbf{\numprint{12550}} & \numprint{12746} & \textbf{\numprint{20049}} & \numprint{21033} \\
144 &       \textbf{\numprint{6438}} & \numprint{6438} & \numprint{15128} & \textbf{\numprint{15122}} & \textbf{\numprint{25119}} & \numprint{25301} & \textbf{\numprint{37782}} & \numprint{37899} & \textbf{\numprint{56399}} & \numprint{56463} & \textbf{\numprint{78626}} & \numprint{80621} \\
wave &       \textbf{\numprint{8594}} & \numprint{8616} & \textbf{\numprint{16668}} & \numprint{16822} & \textbf{\numprint{28513}} & \numprint{28664} & \textbf{\numprint{42308}} & \numprint{42620} & \textbf{\numprint{61756}} & \numprint{62281} & \textbf{\numprint{85254}} & \numprint{86663} \\
m14b &       \textbf{\numprint{3823}} & \numprint{3823} & \textbf{\numprint{12948}} & \numprint{12977} & \textbf{\numprint{25522}} & \numprint{25550} & \textbf{\numprint{42015}} & \numprint{42061} & \numprint{66401} & \textbf{\numprint{65879}} & \textbf{\numprint{96881}} & \numprint{100064} \\
auto &       \textbf{\numprint{9683}} & \numprint{9716} & \textbf{\numprint{25836}} & \numprint{25979} & \textbf{\numprint{44841}} & \numprint{45109} & \textbf{\numprint{75792}} & \numprint{76016} & \textbf{\numprint{120174}} & \numprint{120534} & \textbf{\numprint{171584}} & \numprint{172357} \\

 \hline

\end{tabular}
 \end{center} \caption{Computing partitions from scratch $\epsilon = 3$\%. In each $k$-column the results computed by KaFFPa are on the left and the current Walshaw cuts are presented on the right side. }
\end{table}
\end{landscape}

\begin{landscape}
\begin{table}[H]
\begin{center}
\begin{tabular}{|l||r|r||r|r||r|r||r|r||r|r||r|r|}\hline

 Graphi/$k$  & \multicolumn{2}{|c|}{2} & \multicolumn{2}{|c|}{4} & \multicolumn{2}{|c|}{8} & \multicolumn{2}{|c|}{16} & \multicolumn{2}{|c|}{32} & \multicolumn{2}{|c|}{64}\\

\hline 

3elt &       \textbf{\numprint{87}} & \numprint{87} & \textbf{\numprint{197}} & \numprint{197} & \textbf{\numprint{330}} & \numprint{330} & \textbf{\numprint{558}} & \numprint{560} & \numprint{952} & \textbf{\numprint{950}} & \textbf{\numprint{1528}} & \numprint{1539} \\
add20 &       \numprint{691} & \textbf{\numprint{550}} & \numprint{1171} & \textbf{\numprint{1157}} & \numprint{1703} & \textbf{\numprint{1675}} & \numprint{2112} & \textbf{\numprint{2081}} & \textbf{\numprint{2440}} & \numprint{2463} & \textbf{\numprint{2996}} & \numprint{3152} \\
data &       \numprint{182} & \textbf{\numprint{181}} & \textbf{\numprint{363}} & \numprint{368} & \numprint{629} & \textbf{\numprint{628}} & \numprint{1092} & \textbf{\numprint{1086}} & \numprint{1813} & \textbf{\numprint{1777}} & \numprint{2852} & \textbf{\numprint{2798}} \\
uk &       \textbf{\numprint{18}} & \numprint{18} & \textbf{\numprint{39}} & \numprint{39} & \textbf{\numprint{76}} & \numprint{78} & \textbf{\numprint{139}} & \numprint{139} & \textbf{\numprint{242}} & \numprint{246} & \textbf{\numprint{404}} & \numprint{410} \\
add32 &       \textbf{\numprint{10}} & \numprint{10} & \textbf{\numprint{30}} & \numprint{33} & \textbf{\numprint{63}} & \numprint{63} & \textbf{\numprint{117}} & \numprint{117} & \textbf{\numprint{212}} & \numprint{212} & \textbf{\numprint{486}} & \numprint{491} \\
bcsstk33 &       \textbf{\numprint{9914}} & \numprint{9914} & \numprint{20216} & \textbf{\numprint{20198}} & \textbf{\numprint{33922}} & \numprint{33938} & \numprint{54692} & \textbf{\numprint{54323}} & \numprint{77564} & \textbf{\numprint{77163}} & \numprint{107832} & \textbf{\numprint{106886}} \\
whitaker3 &       \textbf{\numprint{126}} & \numprint{126} & \textbf{\numprint{378}} & \numprint{378} & \textbf{\numprint{647}} & \numprint{650} & \numprint{1087} & \textbf{\numprint{1084}} & \textbf{\numprint{1673}} & \numprint{1686} & \textbf{\numprint{2512}} & \numprint{2535} \\
crack &       \textbf{\numprint{182}} & \numprint{182} & \textbf{\numprint{360}} & \numprint{360} & \textbf{\numprint{667}} & \numprint{667} & \textbf{\numprint{1077}} & \numprint{1080} & \numprint{1682} & \textbf{\numprint{1679}} & \textbf{\numprint{2526}} & \numprint{2548} \\
wing\_nodal &       \numprint{1669} & \textbf{\numprint{1668}} & \textbf{\numprint{3524}} & \numprint{3536} & \textbf{\numprint{5346}} & \numprint{5350} & \textbf{\numprint{8266}} & \numprint{8316} & \textbf{\numprint{11855}} & \numprint{11879} & \numprint{16111} & \textbf{\numprint{15873}} \\
fe\_4elt2 &       \textbf{\numprint{130}} & \numprint{130} & \textbf{\numprint{335}} & \numprint{335} & \textbf{\numprint{581}} & \numprint{583} & \textbf{\numprint{986}} & \numprint{991} & \textbf{\numprint{1600}} & \numprint{1633} & \textbf{\numprint{2493}} & \numprint{2516} \\
vibrobox &       \numprint{11486} & \textbf{\numprint{10310}} & \numprint{18856} & \textbf{\numprint{18778}} & \numprint{23948} & \textbf{\numprint{23930}} & \numprint{33113} & \textbf{\numprint{31235}} & \numprint{41812} & \textbf{\numprint{39592}} & \numprint{48841} & \textbf{\numprint{48200}} \\
bcsstk29 &       \textbf{\numprint{2818}} & \numprint{2818} & \numprint{7942} & \textbf{\numprint{7936}} & \textbf{\numprint{13575}} & \numprint{13614} & \numprint{21971} & \textbf{\numprint{20924}} & \numprint{34452} & \textbf{\numprint{33818}} & \numprint{55873} & \textbf{\numprint{54935}} \\
4elt &       \textbf{\numprint{137}} & \numprint{137} & \textbf{\numprint{315}} & \numprint{315} & \textbf{\numprint{516}} & \numprint{516} & \textbf{\numprint{901}} & \numprint{902} & \textbf{\numprint{1520}} & \numprint{1532} & \textbf{\numprint{2554}} & \numprint{2565} \\
fe\_sphere &       \textbf{\numprint{384}} & \numprint{384} & \textbf{\numprint{762}} & \numprint{764} & \textbf{\numprint{1152}} & \numprint{1152} & \textbf{\numprint{1688}} & \numprint{1692} & \textbf{\numprint{2433}} & \numprint{2477} & \textbf{\numprint{3535}} & \numprint{3547} \\
cti &       \textbf{\numprint{318}} & \numprint{318} & \textbf{\numprint{889}} & \numprint{890} & \textbf{\numprint{1684}} & \numprint{1708} & \numprint{2735} & \textbf{\numprint{2725}} & \textbf{\numprint{3957}} & \numprint{4037} & \textbf{\numprint{5609}} & \numprint{5684} \\
memplus &       \numprint{5362} & \textbf{\numprint{5267}} & \numprint{9690} & \textbf{\numprint{9299}} & \numprint{12078} & \textbf{\numprint{11555}} & \numprint{13349} & \textbf{\numprint{13078}} & \numprint{14992} & \textbf{\numprint{14170}} & \numprint{16758} & \textbf{\numprint{16454}} \\
cs4 &       \textbf{\numprint{353}} & \numprint{356} & \textbf{\numprint{922}} & \numprint{936} & \textbf{\numprint{1435}} & \numprint{1467} & \textbf{\numprint{2083}} & \numprint{2126} & \textbf{\numprint{2923}} & \numprint{2958} & \numprint{4055} & \textbf{\numprint{4052}} \\
bcsstk31 &       \textbf{\numprint{2670}} & \numprint{2676} & \textbf{\numprint{7088}} & \numprint{7099} & \textbf{\numprint{12865}} & \numprint{12941} & \textbf{\numprint{23202}} & \numprint{23254} & \textbf{\numprint{37282}} & \numprint{37459} & \numprint{57748} & \textbf{\numprint{57534}} \\
fe\_pwt &       \textbf{\numprint{340}} & \numprint{340} & \textbf{\numprint{700}} & \numprint{700} & \textbf{\numprint{1405}} & \numprint{1405} & \textbf{\numprint{2748}} & \numprint{2772} & \textbf{\numprint{5431}} & \numprint{5545} & \textbf{\numprint{8136}} & \numprint{8310} \\
bcsstk32 &       \textbf{\numprint{4622}} & \numprint{4622} & \textbf{\numprint{8441}} & \numprint{8454} & \textbf{\numprint{19601}} & \numprint{19678} & \textbf{\numprint{35014}} & \numprint{35208} & \textbf{\numprint{59456}} & \numprint{59824} & \numprint{91110} & \textbf{\numprint{91006}} \\
fe\_body &       \textbf{\numprint{262}} & \numprint{262} & \textbf{\numprint{589}} & \numprint{596} & \textbf{\numprint{1014}} & \numprint{1017} & \textbf{\numprint{1701}} & \numprint{1723} & \textbf{\numprint{2787}} & \numprint{2807} & \textbf{\numprint{4642}} & \numprint{4834} \\
t60k &       \textbf{\numprint{65}} & \numprint{65} & \textbf{\numprint{195}} & \numprint{196} & \textbf{\numprint{445}} & \numprint{454} & \textbf{\numprint{801}} & \numprint{818} & \textbf{\numprint{1337}} & \numprint{1376} & \textbf{\numprint{2106}} & \numprint{2168} \\
wing &       \textbf{\numprint{770}} & \numprint{770} & \textbf{\numprint{1597}} & \numprint{1636} & \textbf{\numprint{2456}} & \numprint{2528} & \textbf{\numprint{3842}} & \numprint{3998} & \textbf{\numprint{5586}} & \numprint{5806} & \textbf{\numprint{7651}} & \numprint{7991} \\
brack2 &       \textbf{\numprint{660}} & \numprint{660} & \textbf{\numprint{2731}} & \numprint{2739} & \textbf{\numprint{6634}} & \numprint{6671} & \textbf{\numprint{11240}} & \numprint{11358} & \textbf{\numprint{17137}} & \numprint{17256} & \textbf{\numprint{25827}} & \numprint{26281} \\
finan512 &       \textbf{\numprint{162}} & \numprint{162} & \textbf{\numprint{324}} & \numprint{324} & \textbf{\numprint{648}} & \numprint{648} & \textbf{\numprint{1296}} & \numprint{1296} & \textbf{\numprint{2592}} & \numprint{2592} & \numprint{10604} & \textbf{\numprint{10560}} \\
\hline
fe\_tooth &       \textbf{\numprint{3773}} & \numprint{3773} & \textbf{\numprint{6718}} & \numprint{6825} & \textbf{\numprint{11185}} & \numprint{11337} & \textbf{\numprint{17230}} & \numprint{17404} & \textbf{\numprint{24977}} & \numprint{25216} & \textbf{\numprint{34704}} & \numprint{35466} \\
fe\_rotor &       \textbf{\numprint{1940}} & \numprint{1950} & \textbf{\numprint{6999}} & \numprint{7045} & \textbf{\numprint{12353}} & \numprint{12380} & \textbf{\numprint{19935}} & \numprint{20132} & \textbf{\numprint{31016}} & \numprint{31450} & \textbf{\numprint{46006}} & \numprint{46608} \\
598a &       \textbf{\numprint{2336}} & \numprint{2336} & \textbf{\numprint{7738}} & \numprint{7763} & \textbf{\numprint{15502}} & \numprint{15544} & \textbf{\numprint{25560}} & \numprint{25585} & \textbf{\numprint{38884}} & \numprint{39144} & \textbf{\numprint{56586}} & \numprint{57412} \\
fe\_ocean &       \textbf{\numprint{311}} & \numprint{311} & \textbf{\numprint{1686}} & \numprint{1697} & \textbf{\numprint{3902}} & \numprint{3941} & \textbf{\numprint{7457}} & \numprint{7618} & \textbf{\numprint{12373}} & \numprint{12720} & \textbf{\numprint{19764}} & \numprint{20667} \\
144 &       \textbf{\numprint{6361}} & \numprint{6362} & \numprint{15321} & \textbf{\numprint{15122}} & \numprint{25078} & \textbf{\numprint{25025}} & \numprint{37505} & \textbf{\numprint{37433}} & \textbf{\numprint{56041}} & \numprint{56463} & \textbf{\numprint{78645}} & \numprint{79296} \\
wave &       \textbf{\numprint{8535}} & \numprint{8563} & \textbf{\numprint{16543}} & \numprint{16662} & \textbf{\numprint{28493}} & \numprint{28615} & \textbf{\numprint{42179}} & \numprint{42482} & \textbf{\numprint{61386}} & \numprint{61788} & \textbf{\numprint{84247}} & \numprint{85658} \\
m14b &       \textbf{\numprint{3802}} & \numprint{3802} & \textbf{\numprint{12945}} & \numprint{12976} & \textbf{\numprint{25151}} & \numprint{25292} & \textbf{\numprint{41538}} & \numprint{41750} & \textbf{\numprint{65087}} & \numprint{65231} & \textbf{\numprint{96580}} & \numprint{98005} \\
auto &       \textbf{\numprint{9450}} & \numprint{9450} & \textbf{\numprint{25310}} & \numprint{25399} & \textbf{\numprint{44360}} & \numprint{44520} & \numprint{75195} & \textbf{\numprint{75066}} & \textbf{\numprint{119125}} & \numprint{120001} & \textbf{\numprint{171355}} & \numprint{171459} \\
 \hline

\end{tabular}
 \end{center} \caption{Computing partitions from scratch $\epsilon = 5$\%. In each $k$-column the results computed by KaFFPa are on the left and the current Walshaw cuts are presented on the right side. }
\end{table}
\end{landscape}

\end{document}